\newtheorem{theorem}{Theorem}
\newtheorem{proposition}{Proposition}
\newtheorem{lemma}{Lemma}
\newtheorem{ass}{Assumption}
\newtheorem{remark}{Remark}
\newcommand\EE {\mathbb E}
\newcommand\RR {\mathbb R}
\newcommand\PP {\mathbb P}
\newcommand\QQ {\mathbb Q}
\newcommand\ZZ {\mathbb Z}
\def\bone{\mathbf{1}}
\def\qed{\hskip6pt\vrule height6pt width5pt depth1pt}
\def\qed{\hskip 6pt\vrule height6pt width5pt depth1pt}
\newcommand{\ed}{\end{document}}
\newcommand{\be}{\begin{equation}}
\newcommand{\ee}{\end{equation}}
\newcommand{\bq}{\begin{eqnarray}}
\newcommand{\eq}{\end{eqnarray}}
\newcommand{\cT}{\mathtt{T}}
\begin{document}

\title{A simple microstructural explanation of the concavity of price impact}
\author{Sergey~Nadtochiy\footnote{The author thanks M. Mouyebe for conducting a numerical experiment whose results inspired this article. He thanks S. Jaimungal for providing the market data used herein. The author also thanks J.-P. Bouchaud, P. Ustinov, K. Webster, and the anonymous referees, for the useful discussions and comments that helped him improve this paper significantly.}
\footnote{Partial support from the NSF CAREER grant 1855309 is acknowledged.}
\footnote{The data that support the findings of this study are available from the corresponding author upon reasonable request.}
\footnote{Address the correspondence to: Department of Applied Mathematics, Illinois Institute of Technology, 10 W. 32nd St., Chicago, IL 60616 (snadtochiy@iit.edu).}
}
\date{First draft: January 6, 2020
\\Current version: December 11, 2020
}

\maketitle

\begin{abstract}
This article provides a simple explanation of the asymptotic concavity of the price impact of a meta-order via the microstructural properties of the market. This explanation is made more precise by a model in which the local relationship between the order flow and the fundamental price (i.e. the local price impact) is linear, with a constant slope, which makes the model dynamically consistent. Nevertheless, the expected impact on midprice from a large sequence of co-directional trades is nonlinear and asymptotically concave. The main practical conclusion of the proposed explanation is that, throughout a meta-order, the volumes at the best bid and ask prices change (on average) in favor of the executor. This conclusion, in turn, relies on two more concrete predictions, one of which can be tested, at least for large-tick stocks, using publicly available market data.
\end{abstract}

{\bf Keywords:} concave price impact, ergodic diffusion, market microstructure, meta-order, microprice, VWAP strategy.

\section{Introduction and main results}
\label{se:intro}

The term price impact often refers to the fact that, on average, trades move the asset price in their direction. However, this general observation has several more specific interpretations, and, as pointed out in \cite{Toth}, it is important to differentiate between various types of price impact. First, one may consider the local impact: i.e., the expected price change as a function of the volume of a single trade. The latter is only relevant in the markets where large trades occur often (e.g., OTC markets), but is not relevant, e.g., for the most popular stock exchanges. The second type of impact is the meta-order impact: i.e., the expected price change as a function of the volume of a sequence of co-directed trades (such sequence is called a meta-order). This type of impact is more relevant for public exchanges, where most participants willing to buy or sell a large quantity of the asset split this quantity into smaller pieces (child orders) and submit each of them separately. Within the meta-order impact, one can distinguish two sub-types, depending on whether the relative rate of the execution, or its total duration, is fixed. This paper studies the meta-order impact for a fixed execution rate (also known as the expected price trajectory).

\smallskip

There exist various empirical studies that confirm the concavity of meta-order impact (see, e.g., \cite{Almgren}, \cite{Bershova}, \cite{Bacry}, \cite{Zarinelli}, \cite{Moro}, \cite{Bucci}, \cite{CapponiCont}).\footnote{It is important to note that \cite{CapponiCont} both provides an empirical analysis of the price impact and proposes its explanation. However, the notion of price impact used in \cite{CapponiCont} is fundamentally different from the one used here and in other studies. Namely, \cite{CapponiCont} defines the price impact as the expected absolute value of the price change, which is of course dominated by the price volatility and is very different (qualitatively and quantitatively) from the expected value of the price change, which is more commonly used as the notion of price impact (and the latter notion is adopted herein).} They show that the expected price change as a function of traded volume is concave (see the left part of Figure \ref{fig:2}), and some even claim a specific power (in particular, square-root) dependence of the impact on the volume. These empirical results motivated the search for a sound theoretical explanation of the concavity of price impact. The following four paragraphs describe the main types of explanations available to date.

The first explanation is rather heuristic and has not been documented in the academic literature (to the author's best knowledge). It explains the concavity of price impact by the predictability of future prices and order flow. To understand this argument, notice that the concavity of price impact is equivalent to the statement that the marginal impact of the individual child orders decreases throughout a meta-order. Then, assuming, for example, that the price dynamics have a mean-reverting component, one can argue that the latter will slow down the deviation of the price from its initial level, causing smaller marginal impact of every subsequent trade.\footnote{Of course, this is only true if the initial price is on the right side of the level around which it mean-reverts. This is one of the flaws of this argument.} Similarly, if the market participants can predict (with some accuracy) the size of a meta-order, then, toward the end of its execution, they become less certain that the execution will continue, hence they limit their predatory actions agains the executor (i.e., moving their limit orders or front running), causing smaller marginal impact towards the end of the meta-order. One shortcoming of this type of arguments is, of course, is their reliance on the violation of the efficient market hypothesis, which states that prices are not predictable. Although, in practice, the latter holds only to a certain extent, typically, the inefficiencies in the market disappear over time, once they become known and are of a significant magnitude. Therefore, given the prolonged history of the empirical studies in this area, it is hard to imagine that the predictability of future prices and order flows would still exist on a level that would make a significant contribution to the price impact. Another downside of this explanation is the underlying assumption that the meta-order can be detected by other market participants instantaneously. Indeed, in practice, it takes some time for the predatory traders to detect a meta-order, implying that the very first few trades should cause smaller marginal impact than some of the subsequent ones, which would contradict the concavity of the price impact if the order flow predictability was the main driving force for the shape of the impact.

Another type of explanation is based on the game-theoretic models in which a market-maker provides liquidity for the executor. The concavity is explained, for example, via the shape of market-maker's preferences (\cite{Gabaix}) or via the distribution of the sizes of meta-orders (\cite{Farmer}). While such explanation is admissible for OTC markets, it is not clear whether the conclusions of such models extend to the order-driven markets (e.g., exchanges), where multiple heterogeneous liquidity providers with varying inventories (as they trade with other liquidity consumers) compete for clients' orders.

Additional evidence of the concavity (even square-root property) of price impact is obtained via the dimensional analysis in \cite{Kyle}, \cite{Pohl}. Indeed, if one assumes that the impact depends only on a certain group of factors, then, after few additional invariance assumptions, one can deduce that the square-root is the only function that produces the right unit of measurement for the impact. This is a rather powerful method, but, unfortunately, it sheds very little light on the mechanism that generates the price impact, which is an important part of explaining the phenomenon.

The latent limit order book model (LLOB), described, e.g. in \cite{Toth}, \cite{Donier}, and the model based on Hawkes processes, in \cite{Rosenbaum1}, provide two explanations of the concavity of price impact that are the closest to the one proposed herein. The LLOB model assumes the existence of a large number of potential buyers and sellers whose reservation prices for the asset follow a common stochastic process plus an idiosyncratic Brownian component independent across agents. When the reservation price of a buyer matches that of a seller, they trade and eliminate each other's orders. The price is defined as the point that separates the reservation prices of the buyers and sellers. In the large-population limit, the resulting distribution of reservation prices has a density that evolves deterministically, according to the heat equation, around the common price component. Then, assuming that a meta-order (of constant rate) can be represented by a source term in the heat equation, the authors of \cite{Donier} solve the resulting equation explicitly to recover the square-root price impact. Even though the model proposed in the present paper starts from a very different setting, the key feature of both models is that the liquidity improves on average throughout the meta-order, which implies that every subsequent child order makes smaller impact. This phenomenon of ``improving liquidity" is described in more detail further in this section and in Section \ref{subse:numExample}. It is also worth mentioning that, unlike the first two approaches discussed above, the explanation of the concavity of price impact provided by the LLOB and by the present model is purely mechanical and is consistent with the efficient market hypothesis: i.e., there is no need to assume any strategic behavior of other market participants, nor is it necessary to assume predictability of future prices and order flow.

The model of \cite{Rosenbaum1} (see also the references therein) starts by assuming that the order flow follows a self-exciting Hawkes process, which is defined by the associated kernel function, and that the price is driven only by the order flow (the latter is referred to as the no-arbitrage condition in \cite{Rosenbaum1} and in preceding papers). Then, assuming that the price impact has a non-trivial resilience (i.e., the price process obtains a trend in the opposite direction to the meta-order, right after the execution), the authors of \cite{Rosenbaum1} show that the only kernel that produces a well defined price impact in the infinite-activity regime (i.e., with small orders arriving at a high rate) is the power function. The latter leads to a concave power-type price impact. The argument of \cite{Rosenbaum1} is based on the properties of Hawkes process, one of the most important of which is the self-excitatory property. Even though the explanation of the concavity of price impact proposed herein is different, it is similar to \cite{Rosenbaum1} in that one of its main building blocks is the self-excitatory property of the fundamental price process, discussed further in this section. 

\medskip

The explanation of the concavity of price impact\footnote{It is worth mentioning that, strictly speaking, the present model only explains asymptotic concavity: i.e., that the marginal impact at the beginning of a long meta-order is higher than at its end. With a slight abuse of terminology, we refer to this property as concavity throughout the paper. However, the numerical experiments reveal that, for a wide range of parameters' values, the model produces globally concave price impact curves.} proposed herein has several advantages. First, as shown in the remainder of this section, the main results of this paper can be stated in plain language, without appealing to any sophisticated technical arguments. In this sense, the proposed explanation is model-free. Nevertheless, a specific modeling setting is described in Section \ref{se:math}, where, in particular, it is shown that the arguments presented further in this section to explain the concavity of price impact can co-exist in a reasonable mathematical model, and where the relevant predictions are proven rigorously (this is where the technical arguments are needed). The model of Section \ref{se:math} allows for multiple market participants, consuming and providing liquidity, which makes it well suited for order-driven markets. The setting of the model is ``bottom-up": i.e., its inputs have clear economic meaning and can be measured from market data. The proposed explanation does not require the predictability of the future prices or of the oder flow and, hence, is consistent with the efficient market hypothesis. Finally, a significant advantage of the proposed explanation is that one of its two most important predictions (which directly imply the concavity of price impact) can be tested on real market data \emph{without} the information about meta-orders (as the latter is notoriously difficult to obtain).

\smallskip

Most of the remainder of this section is devoted to the description of the proposed explanation of the concavity of price impact, and of the underlying assumptions, in plain language. The main assumption of this work is the existence of a real-valued process $X$, which we refer to as the fundamental price (this terminology comes from \cite{GaydukNadtochiy1}, \cite{GaydukNadtochiy2}, \cite{GaydukNadtochiy3}), and which has the meaning of a signal predicting the direction of the next trade. Namely, between the jump times of the best bid and ask prices, the process $X$ lies in the interval $[a,b]$ (which may change after the best bid or ask price changes), and the closer it is to $a$ (resp. $b$) the more likely it is that the next trade will be a sell (resp. buy). In addition, we assume that the best bid (resp. ask) price changes when and only when $X$ hits $a$ (resp. $b$).\footnote{This is similar in spirit to the model with uncertainty zones of \cite{Rosenbaum2}, although the latter model only describes the last transaction price, as opposed to the pair of the best bid and ask prices.} Another property that is required from $X$ is that every trade makes a positive linear local impact\footnote{The assumption of linearity of local impact is no loss of generality, as we ultimately consider a regime where every individual trade has infinitesimal size.} on $X$, with the coefficient $\alpha>0$. For example, a buy trade of size $\delta$ changes $X$ to $X+\alpha\delta$.
Any process $X$ that satisfies these properties is referred to as the fundamental price.

Although the proposed explanation works for any fundamental price $X$, it is convenient to think of a concrete example. Namely, measuring all prices in ticks and making the assumption that the bid-ask spread of the asset is always (or, in practice, almost always) equal to one (i.e., assuming that we study a large-tick asset), we conclude that the microprice $X$, defined as
\begin{equation}\label{eq.intro.microprice}
X_t = P^b + \frac{V^b_t}{V^b_t + V^a_t},
\end{equation}
where $P^b$ denotes the best bid price and $V^{b}_t$, $V^a_t$ denote the limit order volumes at the best bid and ask prices, respectively, satisfies all the properties of the fundamental price listed above. Indeed, the microprice always stays in the interval $[P^b,P^b+1]$, by the above definition, and it is known to possess the desired predictive power for the direction of the next trade (see, e.g., \cite{Stoikov}).
In addition, the large-tick property of the asset implies that the best ask (resp. bid) price changes when and only when $X$ hits $P^b+1$ (resp. $P^b$).\footnote{It is important to notice that this property fails for the assets whose bid-ask spreads vary significantly. This is why the microprice is not a good proxy for the fundamental price for such assets (see the also the discussion in Section \ref{subse:marketdata}).}
Finally, it is natural to expect that trades have a positive impact on the microprice: e.g., a part of this impact is purely mechanical, as any buy trade decreases $V^a$ and any sell trade decreases $V^b$.

For the sake of concreteness, all the subsequent analysis is presented under the assumption of a large-tick asset and with the microprice playing the role of the fundamental price $X$. Then, in particular, it is clear that the best bid and the best ask prices are given by the roundings $\lfloor X\rfloor$ and $\lceil X\rceil$, respectively, and that they change whenever $X$ hits an integer. Assuming that, between the trades, $X$ follows a symmetric random walk, we conclude that its global dynamics must have a force, or a drift term, that pushes $X$ away from the midprice. Indeed, if $X$ is slightly above the midprice, the next trade is more likely to be a buy, which pushes $X$ further up and makes the next buy even more likely, and so on. This can be viewed as the self-excitatory behavior of the fundamental price within the spread. For simplicity, let us focus on the dynamics of $X$ between two nearest integers -- i.e., we consider $X\,\texttt{mod}\,1$ ($X$ modulo one). As explained above, the process $X\,\texttt{mod}\,1$ is a random walk (on the unit circle) with a drift that pushes it away from $1/2$. The stationary distribution of such a process\footnote{Recall that the stationary distribution, in particular, is the distribution of the value of the process at any fixed time, provided the process has been running sufficiently long.} must have a U-shaped density (see the top left part of Figure \ref{fig:1}). Thus, before the execution of a meta-order begins, the distribution of the fundamental price modulo one has a U-shaped density. Next, let us analyze what happens toward the end of a meta-order. It is natural to assume that a meta-order, which is a sequence of co-directed trades, introduces an additional drift term in $X\,\texttt{mod}\,1$, of the same sign as the meta-order itself. If this additional drift is constant and sufficiently large, it is easy to deduce that the stationary distribution of the resulting process is uniform (think, e.g., of a Brownian motion with a very large drift, run on a circle). It is clear that the wings (i.e., the values at $0$ and $1$) of a U-shaped density are higher than the wings of a uniform density (the latter are equal to one). Interpolating heuristically between zero additional drift and the infinitely large drift, we conclude that the wings of the stationary density of $X\,\texttt{mod}\,1$ before a sufficiently long meta-order are higher than the wings of its density at the end of this meta-order, for any positive execution rate (compare the top left and the bottom right parts of Figure \ref{fig:1}). Now, the phenomenon of ``improving liquidity" becomes clear. Recalling that $X$ is the microprice (see \eqref{eq.intro.microprice}), one easily sees that the lower wings of the stationary density of $X\,\texttt{mod}\,1$ imply smaller probability of observing low liquidity (i.e., low volume of limit orders) at the best bid or ask. It only remains to connect the wings of the stationary density of $X\,\texttt{mod}\,1$ to the price impact directly. To this end, notice that the expected impact on the midprice of a buy trade of size $\delta$, submitted at time $t=0$, is given by 
$$
\EE \left( \lceil X_0\,\texttt{mod}\,1+\alpha\delta \rceil -1 \right),
$$
where $X_0\,\texttt{mod}\,1$ is a random variable whose density is given by the stationary density of $X\,\texttt{mod}\,1$.
As $\delta\downarrow0$, the leading order of the above expectation is given by the probability that $X_0\,\texttt{mod}\,1+\alpha\delta>1$. The latter, in turn, is proportional to the wings of the stationary density of $X\,\texttt{mod}\,1$. Thus, the expected marginal impact on the midprice is proportional to the wings of the stationary distribution of the fundamental price modulo one. As mentioned above, a meta-order introduces an additional drift term in the dynamics of the fundamental price, thus, switching the market into a different regime. In this new regime, $X\,\texttt{mod}\,1$ attains a new stationary density (provided the meta-order lasts long enough), whose wings are lower than the wings of the original stationary density. Repeating the above argument that connects the wings of the stationary density and the marginal impact, we conclude that the marginal impact at the end of the meta-order is lower than at the beginning, which implies the (asymptotic) concavity of price impact.

\smallskip

The above explanation needs an additional clarification, which leads to another assumption. Namely, the conclusion that the wings of the stationary distribution decrease during the meta-order relies on the assumption that the fundamental price obtains a constant drift during a meta-order, which is equivalent to the assumption that the meta-order is executed at a constant rate. However, a typical executor aims to hide her activity and trade according to (a version of) the Volume Weighted Average Price (VWAP) strategy. The latter states that the execution rate of a meta-order must be a constant fraction of the rate of the total traded volume in the market (see, e.g., \cite{VWAP}, \cite{JaimungalBook}). Thus, the drift that the fundamental price obtains during a meta-order is constant only if measured on the business clock -- i.e., using the total traded volume instead of the actual time. Therefore, all processes in the above discussion should be run on the business clock, and the conclusions of the previous paragraph require the additional assumption that the execution strategy is similar to VWAP.

\smallskip

Note that the explanation of the concavity of price impact presented above relies only on two predictions. The first one is the U-shape of the (global) stationary distribution of the fundamental price modulo one (run on a business clock). The second prediction is that, during a meta-order, the fundamental price (run on a business clock) obtains an additional constant drift term. While it is impossible to test the second prediction without the meta-order data (although this prediction appears to be self-evident), the first prediction is verified using real market data in Section \ref{subse:marketdata}.

\smallskip

The rest of the paper is organized as follows. 
The precise modeling assumptions and mathematical statements are given in Section \ref{se:math}. Sections \ref{subse:finact.jumpdiff.def} and \ref{subse:infAct.model} describe, respectively, the finite- and infinite-activity models for $X$, in particular, giving the precise form of its drift in terms of the input parameters of the model. The finite-activity model is easier to understand intuitively, while the infinite-activity model allows for more tractable representation of the price impact.
Sections \ref{subse:expectedImpact.def} and \ref{subse:LargeSample} are devoted to the definition and the computation of the price impact of a VWAP meta-order. These sections are complicated by the additional effort made in this paper to avoid treating the executor of a meta-order as an exogenous entity and to use a realistic definition of price impact, consistent with the way it would be computed in practice. In particular, it is shown in Section \ref{subse:expectedImpact.def} that the proposed setting allows for multiple agents following VWAP-type strategies, with overlapping execution intervals and with varying trade directions. The price impact is defined as the conditional expectation of the price change over the interval on which the first agent's trades sum up to a given volume (the size of the meta-order), given that all trades of this agent in this interval have the prescribed direction. The latter is done to mimic the computation of price impact as a sample average of the price change over the execution intervals.
Section \ref{subse:ImpactWings} shows that the marginal price impact is proportional to the wings of the stationary distribution of the fundamental price modulo one, run on a business clock. Propositions \ref{prop:Iq.lim} and \ref{prop:marginalimpact.at.zero} describe this connection, respectively, in the presence of a meta-order and in its absence. Section \ref{subse:concavity} proves that the aforementioned wings are lower in the presence of a meta-order (Theorem \ref{thm:main}), thus, establishing the asymptotic concavity of price impact. 
The numerical and empirical analysis is presented in Section \ref{se:empirical}.

\section{Mathematical analysis}
\label{se:math}

The roots of the proposed model go back to the game-theoretic setting of \cite{GaydukNadtochiy1}, \cite{GaydukNadtochiy2}, \cite{GaydukNadtochiy3}. However, the specific model proposed herein is natural enough and does not require any additional justification via equilibrium arguments. The core of the model is the assumption that the potential buyers and sellers arrive to the market one by one, having their own reservation prices (i.e., the ``fair" prices for the asset, in their view). If a reservation price of an agent is above (below) the current best ask (bid) price, a sell (buy) trade occurs. The reservation prices are not independent across the potential buyers and sellers: they have a common component $X$ and the idiosyncratic additive part, generated from a (symmetric around zero) distribution with c.d.f. $F$. It is shown in \cite{GaydukNadtochiy3} that, for reasonable values of the model parameters, the agents providing liquidity via limit orders set the equilibrium bid and ask prices exactly at $\lfloor X \rfloor$ and $\lceil X \rceil$, respectively, thus, making the model consistent with the setting described in Section \ref{se:intro}. The details of the model are presented in the remainder of this section, with the main result (the asymptotic concavity of price impact) stated in Theorem \ref{thm:main}. It is worth mentioning that a part of this section contains the description of two models: with finite and infinite trading activity. The former is easier to interpret from the economic (or practical) point of view. The latter provides more tractable expressions for the target quantities. We show that the two are consistent and, ultimately, focus our attention on the infinite-activity model.

\subsection{Finite-activity jump-diffusion model}
\label{subse:finact.jumpdiff.def}

The potential buyers/sellers arrive according to a Poisson process with jump times $\{S_i\}$ and intensity $\lambda$. The reservation price of the $i$th buyer/seller is
$$
p^0_{S_i} = \tilde X_{S_i-} + \xi_i,
$$
where $\tilde X_{S_i-}:=\lim_{t\uparrow S_i} \tilde X_t$, $\{\xi_i\}$ are i.i.d. random variables, with c.d.f. $F$, and $\tilde X$ evolves according to
$$
\tilde X_t = X_0 + \alpha\sum_{S_i\leq t} \Delta V_{S_i} + \sigma(\tilde X_t) \tilde B_t\,\,\text{mod}\,\,1,
$$
with $\tilde B$ being a Brownian motion independent of $\{S_i,\xi_i\}$, and with
$$
\Delta V_{S_i} := \delta \bone_{\{\xi_i\geq \lceil \tilde X_{S_i-} \rceil - \tilde X_{S_i}\}} - \delta \bone_{\{\xi_i\leq \lfloor X_{S_i-}\rfloor - X_{S_i} \}}.
$$
Throughout the rest of the paper, we make the following standing assumptions on $F$ and $\sigma$.
\begin{itemize}
\item $F\in C^{1+\epsilon}([-1,1])$, for some $\epsilon\in(0,1)$, and $F'$ is symmetric around $x=0$ in this range.
\item $\inf_{x\in[0,1]}(F(x-1)+F(-x))>0$.
\item $\sigma$ has period one and is symmetric around $x=1/2$.
\item $\inf_{x\in[0,1]}\sigma(x)>0$.
\item $\sigma\in C^{1+\epsilon}([0,1])$.
\end{itemize}
In the above, we use the standard notation $C^{n+\epsilon}([a,b])$ to denote the space of real-valued functions on $[a,b]$ that are $n$ times continuously differentiable, with $\epsilon$-H\"older derivatives of order $n$.

Denote by $M$ a Poisson random measure with the compensator
$$
\mu(dt,dx) = \lambda dt\otimes (\PP\circ \xi^{-1}_i)(dx)
= \lambda dt\otimes dF(x),
$$
independent of $\tilde B$. Then, the fundamental price and the order flow are described by the following system:
\begin{equation}\label{eq.tildeX.def}
\left\{
\begin{array}{l}
{\tilde X_t = X_0 + \alpha (N^+_t - N^-_t) + \sigma(\tilde X_t) \tilde B_t,}\\
{N^+_t = \delta\int_0^t \int_{\RR} \bone_{\{x\geq \beta^+(\tilde X_{u-})\}} M(du,dx),}\\
{N^-_t = \delta\int_0^t \int_{\RR} \bone_{\{x\leq \beta^-(\tilde X_{u-})\}} M(du,dx),}
\end{array}
\right.
\end{equation}
where
$$
\beta^+(x) := \lceil x \rceil - x,\quad \beta^-(x) := \lfloor x\rfloor - x.
$$
Note that
$$
N^+_t = \sum_{S_i\leq t} \max(\Delta V_{S_i},0),\quad N^-_t=\sum_{S_i\leq t} \max(-\Delta V_{S_i},0).
$$
%Only $(N^-,N^+)$ and the hitting times of $0^+$ and $1^-$ by $X$ are observed, whereas $(\alpha,\sigma,F,\lambda)$ need to be estimated from data.
The input to the model is $(\alpha,\sigma,F,\lambda,\delta)$.

\subsection{Infinite-activity diffusion model}
\label{subse:infAct.model}

For analytic tractability, it is convenient to consider an infinite-activity limit of the model \eqref{eq.tildeX.def}, as $\lambda\rightarrow\infty$. In order to avoid the explosion of total order flow, we need to assume that $\delta\rightarrow0$, so that $\lambda\delta \rightarrow\gamma$, with some constant $\gamma>0$. For simplicity, we assume that $\delta=\gamma/\lambda$.
Notice that
$$
dN_t := d (N^+_t - N^-_t) = \lambda\delta \left(F(-\beta^+(\tilde X_t)) - F(\beta^-(\tilde X_t))\right) dt + dZ_t,
$$
where $Z$ is a martingale.
%Let the intensity or arrivals $\lambda\rightarrow\infty$ and the size of the orders $\delta\rightarrow 0$, so that
%$$
%\lambda\delta\rightarrow \gamma.
%$$
Then,
$$
\langle Z\rangle_t = \int_0^t \lambda\delta^2 \left(F(-\beta^+(\tilde X_s)) + F(\beta^-(\tilde X_s))\right) ds
\sim \delta\gamma\int_0^t \left(F(-\beta^+(\tilde X_s)) + F(\beta^-(\tilde X_s))\right) ds\rightarrow0.
$$ 
Thus, we expect $\tilde X$ to converge to $X$ which is the solution of
\begin{equation}\label{eq.DiffModel.X.balanced}
dX_t = \alpha \gamma \left(F(-\beta^+(X_t)) - F(\beta^-(X_t))\right) dt + \sigma(X_t) dB_t,
\end{equation}
equipped with the same initial condition as $\tilde X$.
%\smallskip
We do not make this statement precise, as we will in fact need the convergence of time-changed processes, established in Lemma \ref{le:NY.conv}.

\smallskip

It is important to notice that the drift of $X$ in \eqref{eq.DiffModel.X.balanced} is increasing on $(n,n+1)$, negative in $(n,n+1/2)$, and positive in $(n+1/2,n+1)$, for any integer $n$. Indeed, for $x\in(n,n+1)$, the expression
$$
F(-\beta^+(x)) - F(\beta^-(x)) = F(x-n-1) - F(n - x)
$$
is increasing in $x$, as the c.d.f. $F$ is an increasing function, and the above expression is equal to zero at $x=n+1/2$.
These observations imply that the drift of $X$ pushes it away from the midprice, consistent with the conclusions of Section \ref{se:intro}.

%\begin{lemma}\label{le:X.conv}
%As $\lambda\rightarrow\infty$, with $\delta = \gamma/\lambda$, for an arbitrary fixed $\gamma>0$, we have:
%$$
%\PP\circ \tilde X^{-1} \rightarrow \PP\circ X^{-1},
%$$
%where the convergence is weak with respect to the topology induced by the $C([0,\infty))$-seminorms.
%\end{lemma}
%\begin{proof}
%{\color{red}TBD}
%\qed
%\end{proof}

\subsection{Expected impact on midprice by a VWAP strategy}
\label{subse:expectedImpact.def}

In this subsection, we define the expected price impact of a VWAP execution strategy and find its convenient representation (equation \eqref{eq.tildeI.rep.1}) in the finite-activity model described above. It is assumed that the impact is computed from a sample of consecutive past trades of the agent of the total size $L$. Then, we define the expected price impact in the infinite-activity model as a corresponding limit of the finite-activity impact (see \eqref{eq.I.def}) and show that it has a natural interpretation in terms of the infinite-activity model itself (Proposition \ref{prop:I.via.tildeI}).

\subsubsection{Finite activity model}

Recall that the process $(N_t= N^+_t - N^-_t)$ represents the total order flow in the market. In the finite activity model \eqref{eq.tildeX.def}, we have
\begin{align}
&\tilde X_t = X_0 + \alpha N_t + \int_0^t \sigma(\tilde X_u) d\tilde B_u,\label{eq.tildeX.def.2}\\
&N_t = \delta\int_0^t \int_{\RR} \left(\bone_{\{x\geq \beta^+(\tilde X_{u-})\}}- \bone_{\{x\leq \beta^-(\tilde X_{u-})\}}\right) M(du,dx),\label{eq.N.def.2}
\end{align}
where $\tilde B$ is a Brownian motion and $M$ is an independent Poisson random measure (see, e.g., \cite[Chapter II]{JacodShiryaev}) with the compensator
$$
\mu(dt,dx) = \lambda dt\otimes dF(x).
$$

The total order flow is a sum of the order flows of $K$ individual market participants (agents):
\begin{equation}\label{eq.N.via.Nj}
N = \sum_{j=1}^K N^j.
\end{equation}
We assume that the agents follow VWAP-type strategies. Namely, the $j$th agent has the order flow
\begin{equation}\label{eq.Nj.def}
N^j_t = \delta\int_0^t \int_{\RR} \left(\bone_{\{x\geq \beta^+(\tilde X_{u-})\}} + \bone_{\{x\leq \beta^-(\tilde X_{u-})\}}\right) \zeta^j(\tilde X_u-,u) M^j(du,dx),\quad j=1,\ldots,K,
\end{equation}
where $\{M^j\}$ are independent Poisson random measures with the compensators
$$
\mu^j(dt,dx) = \lambda^j dt\otimes dF(x),\quad j=1,\ldots,K,
\quad
\sum_{j=1}^K \lambda^j = \lambda,
$$
and $(\omega,x,u)\mapsto\zeta^j(x,u)\in\{\pm1\}$ is a random field, defined for all $(x,u)\in\RR\times\RR_+$, s.t.
\begin{itemize}
\item $\{\zeta^j(\cdot,u)\}$ are independent across $j=1,\ldots,K$, across $u\geq0$, and independent of $(\{M^j\},\tilde B)$,
\item for each $j=1,\ldots,K$ and $(x,u)\in\RR\times\RR_+$, we have
\begin{equation}\label{eq.sec2.Pzetaj.def}
\PP(\zeta^j(x,u)=1) = \frac{F(-\beta^+(x))}{F(-\beta^+(x)) + F(\beta^-(x))}.
\end{equation}
\end{itemize}

The random fields $\{\zeta^j\}$ represent the heterogeneity of the agents in terms of their trading styles. For example, at any given moment in time, one agent may buy (i.e., $\zeta^j=1$), while another one may sell (i.e., $\zeta^j=-1$). In general, we allow the agents' trading styles to depend on the fundamental price and on their idiosyncratic sources of randomness. The assumptions we make on $\{\zeta^j\}$ are not the most general, but they ensure that the model is consistent: i.e., the total order flow $N=\sum_{j=1}^K N^j$ satisfies \eqref{eq.N.def.2}, with an appropriately chosen Poisson random measure $M$ having the prescribed compensator and being independent of $\tilde B$. The latter observation is made precise in the appendix.

%\begin{remark}
%The definition of $N^j$ via \eqref{eq.Nj.def} may seem rather stylized. However, a careful examination reveals that the class of possible processes defined by \eqref{eq.Nj.def} is quite natural and general (given that $N=\sum_{j=1}^N N^j$ needs to satisfy \eqref{eq.N.def.2}). Of course, there exist more canonical ways of describing the processes in this class. We choose this specific representation in order to simplify the computation of conditional expectation that follows.
%\end{remark}

%\smallskip

To see why we call the agents' strategies VWAP-type,notice that, in the present model, the total traded volume in the market at time $t$ is given by
$$
\tilde V_t = \delta\int_0^t \int_{\RR} \left(\bone_{\{x\geq \beta^+(\tilde X_{u-})\}} + \bone_{\{x\leq \beta^-(\tilde X_{u-})\}}\right) M(du,dx).
$$
Comparing the above with \eqref{eq.Nj.def}, we conclude that the $j$th agent in the proposed model trades with the rate that is $\lambda^j/\lambda$ fraction of the overall rate, which makes it similar to VWAP. However, unlike the classical VWAP strategy, where the trades are made in the same direction, our agents trade in different directions. It is also worth noting the difference between $N$ (in \eqref{eq.N.def.2}) and $\tilde V$ given above: the former denotes the order flow process, to which the buy market orders contribute with the positive sign and the sell orders contribute with the negative sign, while $\tilde V$ denotes the traded volume process, to which all trades contribute with the positive sign.

\medskip

Let us now assume that the first agent aims to compute the expected impact of a sequence of her buy trades (referred to as the execution interval) on the midprice.\footnote{We only consider the execution intervals of buy trades, as the case of sell trades is analogous.}
%The latter means that, if a trade occurs at a given time, its direction is independent of the location of the fundamental price at that time. This restriction represents one of the key features of an optimal execution algorithm (e.g. VWAP), which typically restricts all trades in the sequence to a specific direction. We refer to such time intervals as the {\bf execution intervals} (we only consider the execution intervals of buy trades, as the case of sell trades is analogous).
%Mathematically, the execution intervals are defined by the property that, over any such interval, all atoms of $M^1$ have positive $x$-coordinates.
Recalling \eqref{eq.Nj.def}, we conclude that any such execution interval can be characterized by the condition $\zeta^1=1$.
In order to compute this expected impact in practice, the agent (i) finds the past execution intervals in a sample of past $L$ trades, (ii) records the changes in the quoted ask price over each interval, and (iii) computes the sample average of these changes. Mathematically, this process corresponds to computing
$$
\tilde I_{L}(Q,\delta,\lambda,\lambda^1):=\EE \left(\lceil \tilde X_{\tau_1}\rceil - \lceil \tilde X_{\tau_0}\rceil\,\vert\, \zeta^1(\cdot,t)=1,\,t\in(\tau_0,\tau_1]\right),
$$
where
%$\tilde X_{\tau_0-}:=\lim_{s\uparrow\tau_0} \tilde X_s$ and 
the constants $Q>0$ and $L>0$ denote, respectively, the (fixed) total amount of the asset purchased by the first agent over each randomly selected execution interval and the size of the sample (measured in trades of the first agent) from which the execution intervals are selected. The random times $\tau_0$ and $\tau_1$ denote, respectively, the beginning and the end of a randomly chosen execution interval:
$$
\tau_0:= \cT(\tilde V^1,\eta),
\quad \tau_1:=\cT(\tilde V^1, \tilde V^1_{\tau_0}+Q),
%\quad \tilde U^1_\cdot := \left(\tilde V^1_{\cdot}\right)^{-1},
\quad \eta\sim U(0,L),
%\tau_1:=\inf\{t\geq\tau_0:\, N^1_{t} - N^1_{\tau_0} = Q\}.
$$
\begin{equation}\label{eq.tildeV1.def}
\tilde V^1_t := \delta\int_0^t \int_{\RR} \left(\bone_{\{x\geq \beta^+(\tilde X_{u-})\}} + \bone_{\{x\leq \beta^-(\tilde X_{u-})\}}\right) M^1(du,dx),
\end{equation}
where $U(0,L)$ denotes the uniform distribution on $[0,L]$, we assume that $\eta$ is independent of everything else, and we introduce the hitting time (or inverse) functional
$$
\cT(Z,h):=\inf\{t\geq0:\, Z_t>h\},
$$
for any process $Z$ on $[0,\infty)$ and any level $h\in\RR$. Note that $\tilde V^1$ defined above represents the first agent's traded volume process.
%the inverse function is chosen to be right-continuous, and .
Thus, the above choice of $\tau_0$ corresponds to the assumption that each trade of the agent is equally likely to be the first trade of a VWAP meta-order.
%Mathematically, this corresponds to assuming that
%$$
%\tau_0:=\tilde U^1_\eta,
%\quad \tilde U^1_t := \left(N^1_{\cdot}\right)^{-1}(t),\quad \eta\sim U(0,1),
%$$

\medskip

Next, we notice that \eqref{eq.tildeX.def.2}--\eqref{eq.Nj.def} imply
\begin{align}
&\tilde X_t = X_0 + \alpha\delta \int_0^t\int_\RR\sum_{j=1}^K \left(\bone_{\{x\geq \beta^+(\tilde X_{u-})\}} + \bone_{\{x\leq \beta^-(\tilde X_{u-})\}}\right) \zeta^j(\tilde X_{u-},u) M^j(du,dx) + \int_0^t \sigma(\tilde X_u) d\tilde B_u.\label{eq.rev.sec2.tilde X.rep}
\end{align}
Using \eqref{eq.rev.sec2.tilde X.rep} and \eqref{eq.tildeV1.def}, as well as the definition of the random field $\zeta^1$, we easily deduce that $\tilde X_{\tau_0}$ is independent of the values of $\zeta^1$ on the random time interval $(\tau_0,\tau_1]$.
Recall that $\tau_1-\tau_0$ is a function of the path $Z_t:=\tilde V_{\tau_0+t}-\tilde V_{\tau_0}$ for $t\in[0,\cT(Z,Q)]$: indeed, 
$$
\tau_1-\tau_0 = \cT(\tilde V^1, \tilde V^1_{\tau_0}+Q) - \tau_0 = \cT(Z,Q).
$$
Thus, $\tilde X_{\tau_1}$ is a function of the path of $(Z_t, \tilde X_{\tau_0+t})$ for $t\in[0,\cT(Z,Q)]$. For convenience, we denote 
$$
\tilde X_{\tau_1}=:g\left(Z_t, \tilde X_{\tau_0+t},\,\,t\in[0,\cT(Z,Q)]\right).
$$
Using \eqref{eq.rev.sec2.tilde X.rep} and \eqref{eq.tildeV1.def} again, we deduce that the conditional distribution of the path of $(Z_t, \tilde X_{\tau_0+t})$ for $t\in[0,\cT(Z,Q)]$, given $\tilde X_{\tau_0}=x$ and $\zeta^j(\cdot,t)=1$ for $t\in(\tau_0,\tau_1]$, is the same as the distribution of the path of $(\tilde N_t(x), \tilde Y_t(x)$, for $t\in[0,\cT(\tilde N(x),Q)]$, where,
\begin{equation}\label{eq.tildeN.def}
\tilde N_t(x) := \delta \int_0^t \int_\RR \left(\bone_{\{z\geq \beta^+(\tilde Y_{u-}(x))\}} + \bone_{\{z\leq \beta^-(\tilde Y_{u-}(x))\}}\right) \tilde M(du,dz),
\end{equation}
\begin{equation}\label{eq.tildeY.def}
\tilde Y_t(x) := x + \alpha \tilde N_t(x) + \alpha\delta\int_0^t \int_{\RR} \left(\bone_{\{z\geq \beta^+(\tilde Y_{u-}(x))\}}- \bone_{\{z\leq \beta^-(\tilde Y_{u-}(x))\}}\right) \hat M(du,dz) + \int_0^t \sigma(\tilde Y_u(x)) d\tilde W_u,
\end{equation}
for any $x\in\RR$, with $\tilde W$ being a Brownian motion, and with $\tilde M$ and $\hat M$ being independent Poisson random measures with the respective compensators
$$
\tilde\nu(dt,dx) = \lambda^1 dt\otimes dF(x),
\quad\hat\nu(dt,dx) = (\lambda-\lambda^1) dt\otimes dF(x).
$$
By possibly extending the probability space, we assume that $(\tilde Y(x),\tilde N(x))$ are constructed on the same probability space as $(\tilde X,\tilde V^1,\eta)$ and are independent of the latter.
%and all are mutually independent and independent of $(\tilde X,N^1,\eta)$.
The above observations imply
\begin{align*}
&\tilde I_{L}(Q,\delta,\lambda,\lambda^1)=\EE \left(\lceil \tilde X_{\tau_1}\rceil - \lceil \tilde X_{\tau_0}\rceil\,\vert\, \zeta^1(\cdot,t)=1,\,t\in(\tau_0,\tau_1]\right)= - \EE \lceil \tilde X_{\tau_0}\rceil\\
&+ \EE\left( \EE \left(\lceil g(Z_t, \tilde X_{\tau_0+t},\,\,t\in[0,\cT(Z,Q)]) \rceil\,\vert\,\tilde X_{\tau_0},\, \zeta^1(\cdot,t)=1,\,t\in[\tau_0,\tau_1]\right) \,\vert\,\zeta^1(\cdot,t)=1,\,t\in(\tau_0,\tau_1] \right)\\
&= \EE \left[\left( \EE \lceil g(\tilde N_t(x), \tilde Y_{t}(x),\,\,t\in[0,\cT(\tilde N(x),Q)])\right)_{x=\tilde X_{\tau_0}} \rceil\,\vert\, \zeta^1(\cdot,t)=1,\,t\in[\tau_0,\tau_1]\right] - \EE \lceil \tilde X_{\tau_0}\rceil\\
&= \EE \lceil \tilde Y_{\cT(\tilde N(x),Q)}(\tilde X_{\tau_0}) \rceil - \EE \lceil \tilde X_{\tau_0}\rceil,
\end{align*}
where we recalled the meaning of function $g$ to obtain the las equality.

Introducing the time-changed processes
\begin{align*}
& \bar X_v := \tilde X_{\cT(\tilde V^1,v)},\quad  \bar Y_v(x) := \tilde Y_{\cT(\tilde N(x),v)}(x),
\end{align*}
we summarize the above result as 
\begin{equation}\label{eq.tildeI.rep.1}
\tilde I_{L}(Q,\delta,\lambda,\lambda^1)=\EE \left(\lceil \bar Y_Q(\bar X_\eta)\rceil - \lceil \bar X_\eta\rceil\right),
%\quad \tilde\tau_1(x):=\inf\{t\geq0:\, \tilde N^{x}_{t} = Q\},
%\end{equation}
%\begin{equation}\label{eq.hatX.def}
%\quad \bar X_\cdot := \tilde X_{(\tilde V^1_\cdot)^{-1}},
%\quad \bar Y^x_\cdot := \tilde Y^x_{(\tilde N^x_\cdot)^{-1}},
\end{equation}
where $(\tilde V^1,\tilde X)$ are defined in \eqref{eq.tildeV1.def}, \eqref{eq.rev.sec2.tilde X.rep}, and $(\tilde Y^x,\tilde N^x)$ are defined in \eqref{eq.tildeN.def}, \eqref{eq.tildeY.def}.

\smallskip

The process $\tilde N$ represents the order flow of the first agent during her (buy) execution intervals.
The interpretation of $\tilde Y$ is also clear: it is a model for the fundamental price during the (buy) execution intervals of the first agent. Since the order flow is biased upwards in such intervals, the integrand in \eqref{eq.tildeN.def} has a sum of two (mutually exclusive) indicators, and the resulting nondecreasing process $\tilde N$ creates an upward trend in the dynamics of $\tilde Y$. Note also that the processes $\bar X$ and $\bar Y(x)$ run on the business clock, hence, their indices are measured in traded volume.

\subsubsection{Infinite-activity model}

Note that the expected impact on midprice, given by \eqref{eq.tildeI.rep.1}, depends only on the distributions of $\bar X$ and $\bar Y$, defined in \eqref{eq.tildeI.rep.1}.
Let us fix $\gamma>0$ and $\theta\in(0,1]$, and consider the limit of the expected impact:
\begin{equation}\label{eq.I.def}
I_{L}(Q,\theta):=\lim_{\lambda\rightarrow\infty} \tilde I_L(Q,\gamma/\lambda,\lambda,\theta\lambda),
\end{equation}
provided it is well defined. The restriction $\delta=\gamma/\lambda$ is discussed at the beginning of Subsection \ref{subse:infAct.model}. The condition $\lambda^1=\theta\lambda$ is equivalent to the assumption that the market participation rate of the first agent is fixed as we vary $\lambda$.

\smallskip

%Lemma \ref{le:X.conv} shows 
It turns out that $\bar X$ converges (weakly) to $\hat X$, given by 
\begin{equation}\label{eq.Xhat.def}
\hat X_t = X_0 + \int_0^t \hat\mu_0(\theta,\hat X_u) du
+ \int_0^t \hat\sigma(\theta,\hat X_u) d\hat B_u,
\end{equation}
with a Brownian motion $\hat B$ and with
$$
\hat\mu_0(\theta,y):= \alpha\frac{F(-\beta^+(y)) - F(\beta^-(y))}{\theta\left(F(-\beta^+(y)) + F(\beta^-(y))\right)},
\quad \hat\sigma(\theta,y):= \frac{\sigma(y)}{\sqrt{\theta\gamma (F(-\beta^+(y))+F(\beta^-(y)))}},
$$
while $\bar Y(x)$ converges (weakly) to $\hat Y(x)$, given by
%\begin{equation}\label{eq.V.def}
%V^x_t = \gamma\theta \int_0^t \left(F(-\beta^+(Y^x_u)) + F(\beta^-(Y^x_u))\right) du,
%\end{equation}
\begin{equation}\label{eq.Y.def}
\hat Y_t(x) = x + \int_0^t \hat\mu_1(\theta,\hat Y_u(x)) du
+ \int_0^t \hat\sigma(\theta,\hat Y_u(x)) d\hat W_u,
\end{equation}
with a Brownian motion $\hat W$ and with
$$
\hat\mu_1(\theta,y):= \alpha\frac{2\theta F(\beta^-(y)) + F(-\beta^+(y)) - F(\beta^-(y))}{\theta\left(F(-\beta^+(y)) + F(\beta^-(y))\right)}.
$$

To make the above statement precise, we view $\bar X$, $\bar Y(x)$ as random elements with values in the Skorokhod space $D([0,\infty))$, and $\hat X$, $\hat Y(x)$ as random elements with values in $C([0,\infty))$. Note also that the laws of the processes $\hat X$ and $\hat Y(x)$ are uniquely determined by \eqref{eq.Xhat.def} and \eqref{eq.Y.def}, which can be easily seen by applying the scale function transformation and reducing these SDEs to the ones with no drifts and with Lipschitz diffusion coefficients.

\begin{lemma}\label{le:NY.conv}
As $\lambda\rightarrow\infty$, for any $x\in\RR$, we have:
$$
\PP\circ \bar X^{-1} \rightarrow \PP\circ \hat X^{-1},
\quad \PP\circ (\bar Y(\bar X_\eta))^{-1} \rightarrow \PP\circ (\hat Y(\hat X_\eta))^{-1},
$$
where the convergence is in weak topology induced by the $C([0,\infty))$-seminorms.
\end{lemma}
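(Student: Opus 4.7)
The plan is to use a random time-change argument at the level of semimartingale characteristics and identify the limit via the martingale problem. Set $\rho(y):=\theta\gamma(F(-\beta^+(y))+F(\beta^-(y)))$, and decompose the pre-limit processes as
\begin{align*}
\tilde X_t &= X_0 + \alpha\gamma\int_0^t \left(F(-\beta^+(\tilde X_u))-F(\beta^-(\tilde X_u))\right)du + \int_0^t\sigma(\tilde X_u)\,d\tilde B_u + \alpha M^N_t,\\
\tilde V^1_t &= \int_0^t \rho(\tilde X_u)\,du + N^\lambda_t,
\end{align*}
where $M^N,N^\lambda$ are compensated jump martingales. Both have jumps of size $O(\delta)=O(1/\lambda)$ and intensity $O(\lambda)$, so their predictable quadratic variations are $O(1/\lambda)$ and, by Doob's inequality, both vanish in probability uniformly on compact time sets. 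The standing assumption $\inf_{x\in[0,1]}(F(x-1)+F(-x))>0$ gives a uniform positive lower bound on $\rho$. Consequently $\tau_\lambda(v):=\cT(\tilde V^1,v)$ remains bounded on any compact $v$-interval with high probability, and $\tau_\lambda'(v)$ is close (in probability, uniformly on compacts) to $1/\rho(\bar X_v)$.

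Tightness of $\{\bar X\}_\lambda$ in $D([0,\infty))$ would then follow from Aldous's criterion applied on this bounded horizon; since the jumps of $\bar X$ have size $\alpha\delta\to 0$, every subsequential limit is continuous. To identify the limit, I would apply It\^{o}'s formula to $f(\tilde X_{\tau_\lambda(v)})$ for $f\in C_c^2(\RR)$ and invoke the time-change rule for semimartingales. The drift becomes
$$\frac{\alpha\gamma\left(F(-\beta^+(\bar X_v))-F(\beta^-(\bar X_v))\right)}{\rho(\bar X_v)}=\hat\mu_0(\theta,\bar X_v),$$
the predictable quadratic variation of the continuous martingale part becomes $\int_0^v\sigma^2(\bar X_s)/\rho(\bar X_s)\,ds=\int_0^v\hat\sigma^2(\theta,\bar X_s)\,ds$, and the contribution of $\alpha M^N$ vanishes. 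Every subsequential weak limit therefore solves the martingale problem for $\hat{\mathcal{L}}f=\hat\mu_0\partial_y f+\tfrac12\hat\sigma^2\partial_y^2 f$, and the H\"{o}lder regularity together with the uniform positivity of $\hat\sigma$ (via the scale-function reduction to a Lipschitz SDE already noted in the paragraph following the lemma's hypotheses) provides uniqueness. Hence $\bar X\Rightarrow\hat X$.

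The second convergence is an exact parallel: applying the same time-change argument to $\tilde Y(x)$ with $\cT(\tilde N(x),\cdot)$ gives $\bar Y(x)\Rightarrow\hat Y(x)$ for each fixed $x$, the limiting drift matching $\hat\mu_1$ by the algebraic computation already performed in Section~\ref{subse:infAct.model}. To pass to the composition $\bar Y(\bar X_\eta)\Rightarrow\hat Y(\hat X_\eta)$, I would use that $(\tilde Y(x),\tilde N(x))$ is constructed to be independent of $(\tilde X,\tilde V^1,\eta)$, together with the a.s.\ continuity of the stochastic flow $x\mapsto\hat Y(x)$ as a $C([0,\infty))$-valued map (a consequence of strong uniqueness for \eqref{eq.Y.def}) and the first convergence evaluated at $v=\eta$. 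The main technical obstacle is the self-referential character of the time change---the rate $\rho$ is evaluated along $\tilde X$, which after inversion becomes $\bar X$ itself---so the substitution $\tau_\lambda'\approx 1/\rho(\bar X)$ must be justified through a uniform-on-compacts LLN for $N^\lambda$ combined with a Gronwall-type stability estimate for the inverse-hitting functional $\cT(\cdot,v)$ applied to semimartingales with uniformly non-degenerate compensators.
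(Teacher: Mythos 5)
Your proposal works directly in the business clock, establishing tightness and identifying the law of $\bar X_v=\tilde X_{\cT(\tilde V^1,v)}$ via the semimartingale time-change rules. The paper does it the other way around: it first proves $C$-tightness of $(\tilde X,\tilde V^1)$ in \emph{calendar} time, identifies the unique limit $(\check X,\check V^1)$ as the solution of the uncoupled system \eqref{eq.checkX}--\eqref{eq.checkV} via the martingale problem, and then applies the time change to the limit, using the fact that $(f,g)\mapsto f\circ g^{-1}$ is continuous from $C([0,T])\times K^\varepsilon$ to $C([0,T])$ (where $K^\varepsilon$ is the cone of uniformly increasing continuous functions) together with Skorokhod's representation theorem and the portmanteau theorem. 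This order of operations is what dissolves the self-referential difficulty you flag at the end: in calendar time the two characteristics converge jointly without any inversion, and the inversion/composition is performed only once one is in the limit space, where $\check V^1$ is a deterministic-looking, strictly increasing integral functional of the continuous path $\check X$.

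The gap in your proposal is exactly the one you identify yourself. You never actually supply the ``Gronwall-type stability estimate for the inverse-hitting functional applied to semimartingales with uniformly non-degenerate compensators,'' and this is not a formality: justifying $\tau'_\lambda(v)\approx 1/\rho(\bar X_v)$ uniformly on compacts, while $\tilde V^1$ is a jump process whose rate is itself driven by a state whose business-clock law you are still trying to establish, is where the real work of the lemma lies. Aldous's criterion for $\{\bar X\}_\lambda$ likewise requires control of $\tau_\lambda$ at stopping times, which again needs the very estimate you are deferring. There is also a quantitative error: the jumps of $\bar X$ at the jump times of $\tilde V^1$ are not of size $\alpha\delta$; they are the increments of $\tilde X$ over one inter-arrival time of $\tilde V^1$, which is $O(1/\lambda)$, and hence include a Brownian contribution of order $O(1/\sqrt{\lambda})$. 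The conclusion that the jumps vanish survives, but your stated bound is off. I recommend replacing the business-clock Aldous argument by the paper's calendar-time route: prove convergence of the pair $(\tilde X,\tilde V^1)$ first, then compose with the inverse. The second convergence then follows just as you sketch (independence of $(\tilde Y(x),\tilde N(x))$ from $(\tilde X,\tilde V^1,\eta)$ plus continuity in $x$ of the limiting flow), once the first one is secured.
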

\begin{proof}
W.l.o.g., we only prove the convergence of $\bar X$. Recall that the latter is a function of two processes, $\tilde X$ and $\tilde V^1$.
First, we prove the $C$-tightness of the joint law of $(\tilde X, \tilde V^1)$ over $\lambda\rightarrow\infty$ on a finite time interval $[0,T]$. To prove the latter, it suffices to show (i) that the absolute values of the two processes are bounded in probability, uniformly over $\lambda$, and (ii) that
$$
\forall\,\varepsilon>0,\quad \lim_{\varepsilon'\rightarrow0}\limsup_{\lambda\rightarrow\infty} \PP\left(\sup_{t,s\in[0,T],\,|t-s|\leq\varepsilon'} |Z_t-Z_s|>\varepsilon\right) = 0,
$$
for $Z=\tilde X, \tilde V^1$. Both (i) and (ii) follow from Chebyshev's inequality and the estimate
$$
\EE \sup_{u\in[t,s]}|Z_u-Z_s| \leq C \lambda_1\delta |t-s|,
$$
where $C$ is a constant and we recall $\lambda_1=\theta\lambda$ and $\delta=\gamma/\lambda$.

Next, we consider any limit point $\Lambda$ (a probability measure on $(C([0,T]))^2$) of the family $\{\PP\circ(\tilde X, \tilde V^1)^{-1}\}_\lambda$, with the associated sequence $\{\lambda_n\rightarrow\infty\}$. In particular, $(\tilde X^n, \tilde V^{1,n})\rightarrow (\check X, \check V^{1})$ in weak topology induced by the $C$-norm. Let us describe the dynamics of $(\check X, \check V^{1})$.
It is easy to see that, for any $f\in C_b(\RR)$,
$$
\EE \sup_{t\in[0,T]} \left| f(\check V^1_t) - \gamma\theta \int_0^t f'(\check V^1_s) \left(F(-\beta^+(\check X_s)) + F(\beta^-(\check X_s))\right) ds \right|
$$
$$
=\lim_{n\rightarrow\infty}\EE \sup_{t\in[0,T]} \left| f(\tilde V^{1,n}_t)- \gamma\theta \int_0^t f'(\tilde V^{1,n}_s) \left(F(-\beta^+(\tilde X^n_s)) + F(\beta^-(\tilde X^n_s))\right) ds \right|=0.
$$
Choosing an appropriate sequence of $f$ approximating the identity, we deduce from the above that
\begin{equation}\label{eq.checkV}
\check V^1_t = \gamma\theta \int_0^t \left(F(-\beta^+(\check X_s)) + F(\beta^-(\check X_s))\right) ds,\quad t\in[0,T].
\end{equation}
Similarly, for any $0<t<s\leq T$, $f\in C_b(\RR)$, $0\leq t_1<\cdots<t_k\leq t$, $g\in C_b(\RR^{2k})$,
$$
\EE \left[g\left(\check X_{t_1}, \check V^{1}_{t_1},\ldots,\check X_{t_k}, \check V^{1}_{t_k}\right) \left( f(\check X_s) - f(\check X_t) - \alpha\gamma \int_t^s f'(\check X_u) \left(F(-\beta^+(\check X_u)) - F(\beta^-(\check X_u))\right) du\right)\right]
$$
$$
=\lim_{n\rightarrow\infty}\EE \left[g\left(\tilde X^n_{t_1}, \tilde V^{1,n}_{t_1},\ldots,\tilde X^n_{t_k}, \tilde V^{1,n}_{t_k}\right) \left( f(\tilde X^{n}_s) - f(\tilde X^{n}_t) \right.\right.
$$
$$
\phantom{??????????????????}
\left.\left.- \alpha\gamma \int_t^s f'(\tilde X^{n}_u) \left(F(-\beta^+(\tilde X^n_u)) - F(\beta^-(\tilde X^n_u))\right) du\right)\right]=0.
$$
From the above, we deduce that the process $M_t:= \check X_t - \gamma \int_0^t \left(F(-\beta^+(\check X_u)) - F(\beta^-(\check X_u))\right) du$, defined on the canonical space $(C([0,T]))^2$, is a continuous martingale under $\Lambda$ and, therefore, is given by a Brownian integral. Using the test function, as in the above, we easily deduce that $d\langle M\rangle_t=\sigma^2(\check X_t)dt$ a.s. under $\Lambda$. Thus we have shown that $\check X$ can be written as
\begin{equation}\label{eq.checkX}
\check X_t = X_0 + \alpha\gamma \int_0^t \left(F(-\beta^+(\check X_u)) - F(\beta^-(\check X_u))\right) du + \int_0^t \sigma(\check X_u) d\check B_u,\quad t\in[0,T],
\end{equation}
where $\check B$ is a Brownian motion under $\Lambda$. As the law of $(\check X, \check V^1)$ is uniquely determined by \eqref{eq.checkV} and \eqref{eq.checkX}, the convergence of $(\tilde X^n,\tilde V^{1,n})$ holds along any sequence $\{\lambda_n\rightarrow\infty\}$.

To conclude the proof, we notice that there exists $\varepsilon>0$, s.t., $\Lambda$-a.s., $\check V^1_\cdot \in K^\varepsilon$, with $K^\varepsilon:=\{f\in C([0,T]):\, f(t)-f(s)\geq \varepsilon(t-s),\,\,\forall\, 0\leq s < t \leq T\}$. It is easy to see that the mapping $(f,g)\mapsto f\circ g^{-1}$ is a continuous mapping from $C([0,T])\times K^\varepsilon$ into $C([0,T])$. Thus, using the Skorokhod's representation theorem and the portmanteau theorem, we conclude that, along any $\{\lambda_n\rightarrow\infty\}$,
$$
\bar X^n_\cdot := \tilde X^n_{(\tilde V^{1,n}_\cdot)^{-1}}\rightarrow \check X_{(\check V^{1}_\cdot)^{-1}}=:\hat X_\cdot,
$$ 
with the inverse being defined as a right-continuous function and with the convergence being in weak topology induced by the $C$-norm. Using \eqref{eq.checkV} and \eqref{eq.checkX}, we easily show that $\hat X$ satisfies \eqref{eq.Xhat.def}. Recalling that the solution to \eqref{eq.Xhat.def} is unique in law, we complete the proof of the lemma.
\qed
\end{proof}

\medskip

\begin{remark}
The proof of Lemma \ref{le:NY.conv} also shows that, in the infinite-activity model, during an execution interval of the first agent, her order flow is given by
$$
\theta \gamma \int_0^\cdot \left(F(-\beta^+(Y_u)) + F(\beta^-(Y_u))\right) du,
$$
where $Y$ represents the dynamics of the fundamental price in such intervals (it is the limit of $\tilde Y$). In addition, during any such interval, the total traded volume in the market is given by 
\begin{equation}\label{eq.totVolume.infAct}
\gamma \int_0^\cdot \left(F(-\beta^+(Y_u)) + F(\beta^-(Y_u))\right) du.
\end{equation}
Thus, in the infinite-activity limit, the first agent still uses a VWAP strategy, with the participation rate $\theta$.
%$$
%\frac{2\gamma\theta F(-\beta^+(Y^x_t))}{\gamma (1+\theta)F(-\beta^+(Y^x_t))}
%=\frac{2\theta }{1+\theta}.
%$$
\end{remark}

In view of Lemma \ref{le:NY.conv}, it is natural to expect that $I_{L}(Q,\gamma,\theta)$, given by \eqref{eq.I.def}, can be computed by replacing $(\bar X,\bar Y)$ by $(\hat X,\hat Y)$ in \eqref{eq.tildeI.rep.1}.

\begin{proposition}\label{prop:I.via.tildeI}
For any $L>0$, $Q>0$, and $\theta\in(0,1]$, the limit in \eqref{eq.I.def} is well defined, and we have
\begin{equation}\label{eq.I.rep.1}
I_{L}(Q,\theta)=\EE \left(\lceil \hat Y_Q(\hat X_{\eta})\rceil - \lceil \hat X_{\eta}\rceil\right),
\end{equation}
with $\eta\sim U(0,L)$ independent of $(\hat X,\hat Y)$.
\end{proposition}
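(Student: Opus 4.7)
The starting point is identity \eqref{eq.tildeI.rep.1}. Setting $\delta=\gamma/\lambda$ and $\lambda^1=\theta\lambda$, the claim reduces to
$$
\lim_{\lambda\to\infty} \EE\bigl(\lceil \bar Y_Q(\bar X_\eta)\rceil - \lceil \bar X_\eta\rceil\bigr) = \EE\bigl(\lceil \hat Y_Q(\hat X_\eta)\rceil - \lceil \hat X_\eta\rceil\bigr).
$$
Lemma \ref{le:NY.conv} already provides $\bar Y(\bar X_\eta)\Rightarrow \hat Y(\hat X_\eta)$ in the weak topology induced by the $C([0,\infty))$-seminorms, with a continuous-path limit. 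Since $\bar Y_0(x)=x$, evaluating this process convergence at times $0$ and $Q$ yields the joint weak convergence of $(\bar X_\eta,\bar Y_Q(\bar X_\eta))$ to $(\hat X_\eta,\hat Y_Q(\hat X_\eta))$. The obstacle to concluding directly by the continuous mapping theorem is that the functional $(x,y)\mapsto \lceil y\rceil-\lceil x\rceil$ is discontinuous on $(\RR\times\ZZ)\cup(\ZZ\times\RR)$.

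I would remove this obstacle by proving that the limit distribution puts no mass on the discontinuity set. The standing assumptions give $\sigma$ bounded below by a positive constant and $F$ bounded below by a positive constant on $[-1,0]$, so $\hat\sigma(\theta,\cdot)$ appearing in \eqref{eq.Xhat.def} and \eqref{eq.Y.def} is bounded above and below; the drifts $\hat\mu_0,\hat\mu_1$ are bounded as well. Consequently both SDEs are uniformly elliptic with H\"older coefficients, and standard parabolic/heat-kernel estimates give $\hat X_s$ a continuous transition density for every $s>0$. Since $\eta\sim U(0,L)$ has a bounded density and is independent of $\hat X$, the law of $\hat X_\eta$ is absolutely continuous and in particular $\PP(\hat X_\eta\in\ZZ)=0$. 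Applying the same ellipticity argument to \eqref{eq.Y.def}, the law of $\hat Y_Q(x)$ admits a density for every fixed $x\in\RR$; integrating out $\hat X_\eta$ against its density yields $\PP(\hat Y_Q(\hat X_\eta)\in\ZZ)=0$. Hence the discontinuity set of $(x,y)\mapsto\lceil y\rceil-\lceil x\rceil$ is a null set for the limit, and the continuous mapping theorem in its a.s.-continuity form gives
$$
\lceil \bar Y_Q(\bar X_\eta)\rceil - \lceil \bar X_\eta\rceil \Rightarrow \lceil \hat Y_Q(\hat X_\eta)\rceil - \lceil \hat X_\eta\rceil.
$$

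To upgrade this weak convergence to convergence of expectations I would establish uniform integrability, starting from the elementary bound
$$
\bigl|\lceil \bar Y_Q(\bar X_\eta)\rceil - \lceil \bar X_\eta\rceil\bigr| \leq \bigl|\bar Y_Q(\bar X_\eta)-\bar X_\eta\bigr|+1,
$$
which reduces the task to a uniform $L^2$-bound on $\bar Y_Q(\bar X_\eta)-\bar X_\eta$. Decomposing $\bar Y_Q(x)-x$ via \eqref{eq.tildeY.def} evaluated at the stopping time $\cT(\tilde N(x),Q)$, the first-agent term contributes exactly $\alpha Q$ up to an overshoot of size $\alpha\delta\to0$; the compensated Poisson integral against $\hat M$ has quadratic variation of order $\delta\gamma\, \cT(\tilde N(x),Q)\to 0$ and compensator bounded in $L^\infty$; and the Brownian integral $\int_0^{\cT(\tilde N(x),Q)}\sigma(\tilde Y_u)\,d\tilde W_u$ has bounded integrand and stopping time of expectation $O(Q/\gamma)$. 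Doob's inequality together with the law-of-large-numbers behavior of $\tilde N(x)$ (growing at rate $\gamma$) then delivers $\sup_\lambda \EE[(\bar Y_Q(\bar X_\eta)-\bar X_\eta)^2]<\infty$. Combining with the distributional convergence above proves \eqref{eq.I.rep.1}. The main technical work lies in the no-atom property of the limit law and in this uniform $L^2$ estimate; once they are in hand, the identity is a direct consequence of Lemma \ref{le:NY.conv}.
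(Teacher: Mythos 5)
Your argument follows the same route as the paper's one-line proof (Lemma \ref{le:NY.conv}, the portmanteau theorem, and the absence of atoms for $\hat X_\eta$ and $\hat Y_Q(\hat X_\eta)$), and you correctly identify and fill a step the paper glosses over: the map $(x,y)\mapsto\lceil y\rceil-\lceil x\rceil$ is \emph{unbounded}, so weak convergence together with the null discontinuity set yields only convergence in distribution of the ceiling difference, and a uniform-integrability argument is genuinely needed to pass to the expectation. Your $L^2$ bound via the decomposition of $\tilde Y_{\cT(\tilde N(x),Q)}(x)-x$ into the first-agent flow $\alpha\tilde N_{\cT(\tilde N(x),Q)}(x)\in(\alpha Q,\alpha Q+\alpha\delta]$, the $\hat M$-integral (martingale part with vanishing quadratic variation plus a compensator of order $\gamma\cT$ with $\EE\cT=O(Q)$), and the Brownian integral with bounded integrand is the right way to close this; the phrase ``compensator bounded in $L^\infty$'' should read ``bounded in $L^2$'' since $\cT$ is random, but the estimate goes through. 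One small slip: $\bar Y_0(x)=\tilde Y_{\cT(\tilde N(x),0)}(x)$ is the value of $\tilde Y$ at the first jump of $\tilde N(x)$, hence equals $x+\alpha\delta+o_p(1)$ rather than exactly $x$; this is harmless because the joint convergence of $(\bar X_\eta,\bar Y(\bar X_\eta))$ can be read directly off the construction in the proof of Lemma \ref{le:NY.conv}, or because $\bar Y_0(\bar X_\eta)-\bar X_\eta\to 0$ in probability as $\lambda\to\infty$.
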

\begin{proof}
The proof follows from Lemma \ref{le:NY.conv}, the portmanteau theorem, and the fact that neither $\hat Y_Q(\hat X_{\eta})$ nor $\hat X_{\eta}$ have atoms.
\qed
\end{proof}

\smallskip

In the remainder of the paper, we stay in the setting of the infinite-activity model.

\subsection{Large-sample limit}
\label{subse:LargeSample}

In this subsection, we consider the limit of the expected price impact of a VWAP meta-order in the infinite-activity model, as the sample size $L$ over which the impact is estimated increases to infinity (see \eqref{eq.Iinf.def}). We then express the resulting (infinite-activity and infinite-sample impact) through the stationary density of the fundamental price run on the business clock (Proposition \ref{prop:impact.inf.L}). 

\smallskip

Recall that $\eta\sim U(0,L)$, where $L$ represents the length of the data sample from which the execution intervals are collected.\footnote{Note that the length of an individual execution interval (measured in trades of the first agent) may be much smaller than $L$: i.e., at this stage, we do not make the assumption that the execution intervals are long.}
As it is natural to estimate impact over a large sample, we consider $L\rightarrow\infty$ and set
\begin{equation}\label{eq.Iinf.def}
I(Q,\theta):=\lim_{L\rightarrow\infty} I_L(Q,\theta),
\end{equation}
provided the limit is well defined.
%\smallskip
Not surprisingly, the large-sample expected impact on midprice turns out to be connected to the stationary distribution of the fundamental price.
We begin with the following technical result.

\begin{lemma}\label{le:ergodicity}
Let us fix an arbitrary $\theta>0$.
Then, there exist unique stationary distributions of $\hat X\,\texttt{mod}\,1$ and $\hat Y\,\texttt{mod}\,1$, with the densities $\psi$ and $\chi$, respectively. These densities are uniquely determined by the following conditions:
\begin{equation}\label{eq.psi.def.cond}
\hat\sigma^2(\theta,\cdot)\psi\in C^{2+\epsilon}([0,1]),\quad \frac{1}{2}\partial^2_x\left(\hat\sigma^2(\theta,x)\psi(x)\right) - \partial_x\left(\hat\mu_0(\theta,x)\psi(x)\right)=0,
\quad x\in(0,1),
\end{equation}
$$
\psi(0^+)=\psi(1^-),\quad \int_0^1 \psi(x)=1,
$$
\begin{equation}\label{eq.chi.def.cond}
\hat\sigma^2(\theta,\cdot)\chi\in C^{2+\epsilon}([0,1]),\quad \frac{1}{2}\partial^2_x\left(\hat\sigma^2(\theta,x)\chi(x)\right) - \partial_x\left(\hat\mu_1(\theta,x)\chi(x)\right)=0,
\quad x\in(0,1),
\end{equation}
$$
\chi(0^+)=\chi(1^-),\quad \int_0^1 \chi(x)=1.
$$
Moreover, for any bounded Borel-measurable function $G$, we have, for any $x\in\RR$,
$$
\lim_{T\rightarrow\infty} \frac{1}{T} \int_0^T \EE\, G(\hat X_t\,\texttt{mod}\,1) dt
= \lim_{T\rightarrow\infty} \EE\, G(\hat X_T\,\texttt{mod}\,1) = \int_0^1 G(z) \psi(z) dz,
$$
$$
\lim_{T\rightarrow\infty} \frac{1}{T} \int_0^T \EE\, G(\hat Y_t(x)\,\texttt{mod}\,1) dt 
= \lim_{T\rightarrow\infty} \EE\, G(\hat Y_T(x)\,\texttt{mod}\,1) = \int_0^1 G(z) \chi(z) dz.
$$
\end{lemma}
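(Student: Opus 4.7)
The proof combines an explicit analysis of the stationary Fokker--Planck ODE with classical ergodic theory for one-dimensional diffusions on a compact state space. I would proceed in three steps.

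\emph{Step 1 (Explicit solution of the ODEs).} Integrating the second-order equation in \eqref{eq.psi.def.cond} once yields $\frac{1}{2}(\hat\sigma^2\psi)' - \hat\mu_0\psi = -J$ for an integration constant $J$ (the probability current). Setting $u := \hat\sigma^2\psi$ and $H(x) := 2\int_0^x (\hat\mu_0/\hat\sigma^2)(\theta,s)\,ds$, the linear first-order ODE for $u$ has explicit solution
$$u(x) = e^{H(x)}\left(u(0) - 2J\int_0^x e^{-H(s)}\,ds\right).$$
The symmetries of $F'$ around $0$ and of $\sigma$ around $1/2$ imply $\hat\mu_0(\theta,1-y) = -\hat\mu_0(\theta,y)$ and $\hat\sigma^2(\theta,1-y) = \hat\sigma^2(\theta,y)$; in particular, $H(1)=0$ and $\hat\sigma^2(\theta,0^+)=\hat\sigma^2(\theta,1^-)$. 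The periodicity condition $\psi(0^+)=\psi(1^-)$, i.e.\ $u(0)=u(1)$, then reduces to $J\int_0^1 e^{-H(s)}\,ds = 0$ and forces $J=0$; normalization fixes $u(0)$, yielding the explicit formula $\psi \propto e^{H}/\hat\sigma^2$. For $\chi$, the drift $\hat\mu_1$ lacks this odd symmetry, and one instead solves the non-degenerate $2\times 2$ linear system in $(u(0),J)$ arising from periodicity and normalization. The regularity $\hat\sigma^2\psi,\hat\sigma^2\chi \in C^{2+\epsilon}([0,1])$ is immediate from the explicit formulas and the assumed $C^{1+\epsilon}$ regularity of $F$ and $\sigma$.

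\emph{Step 2 (Identification as invariant densities).} Since the coefficients of \eqref{eq.Xhat.def} and \eqref{eq.Y.def} are $1$-periodic, bounded, and have diffusion bounded away from zero, the projected processes $\hat X\,\texttt{mod}\,1$ and $\hat Y\,\texttt{mod}\,1$ are uniformly elliptic time-homogeneous Markov processes on the compact circle $\RR/\ZZ$. By the classical scale-function/speed-measure construction for one-dimensional diffusions (which accommodates the bounded jump of the drift at the seam $x=0\equiv 1$), each admits a unique invariant probability measure, whose density satisfies the stationary Fokker--Planck equation on $(0,1)$ together with periodic boundary conditions, and is therefore identified with $\psi$ (respectively $\chi$) by Step~1.

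\emph{Step 3 (Ergodic convergences).} Uniform ellipticity on the compact state space yields a spectral gap for the generator and exponential convergence of the transition density $p_t(x,\cdot)$ to the invariant density in total variation, uniformly in the starting point $x$. This gives $\EE\,G(\hat X_T\,\texttt{mod}\,1)\to \int_0^1 G\psi$ for bounded Borel $G$, and the time-average statement follows by Cesaro averaging. The argument for $\hat Y$ is identical.

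The main technical subtlety is the bounded jump of the drift at the seam $x=0\equiv 1$, which prevents a direct invocation of smooth-coefficient ergodic theorems. I would address this either by mollifying the drift and invoking stability of invariant measures under bounded perturbations, or by appealing to the natural-scale theory of one-dimensional diffusions, which requires only locally integrable drift.
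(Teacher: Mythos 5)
Your proposal is correct but follows a genuinely different route from the paper's. In Step 1 you integrate the stationary Fokker--Planck ODE explicitly via the probability current $J$: the odd symmetry of $\hat\mu_0/\hat\sigma^2$ about $x=1/2$ gives $H(1)=0$, forces $J=0$, and yields the closed form $\psi\propto e^{H}/\hat\sigma^2$; for $\chi$ you pin down $(u(0),J)$ from periodicity and normalization. The paper instead invokes an existence--uniqueness theorem for the elliptic Dirichlet problem with data $\hat\sigma^2\psi(0^+)=\hat\sigma^2\psi(1^-)=c$ (Krylov, Thm.~6.5.3) together with the maximum principle for positivity, and then tunes $c$ by normalization; this is less explicit but sidesteps computing the current. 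In Steps 2--3 you appeal to the spectral gap and exponential total-variation convergence for a uniformly elliptic diffusion on the compact circle, whereas the paper is more hands-on: it establishes stationarity of $\psi$ via It\^o's formula and integration by parts, builds the Markov family on the circle from periodicity of the lifted transition kernel, proves uniqueness of the invariant measure via a scale-function transformation plus Gaussian estimates and regularity bootstrap for the transition density, and obtains the two ergodic limits by tightness and identification of limit points, finishing by an atomlessness argument to pass from continuous to bounded Borel $G$. Your route is shorter at the cost of invoking heavier general machinery (Doeblin/Harris-type convergence on a compact state space); the paper's is more elementary and self-contained. Both arguments must treat the jump of the drift at the seam $0\equiv 1$; the paper handles it through the scale-function transformation to an SDE with Lipschitz diffusion and no drift plus parabolic regularity, and your proposed alternatives (mollification with stability of invariant measures, or natural-scale theory under locally integrable drift) are reasonable but would need a few sentences of justification. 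One detail worth making explicit in your Step~1: the $2\times 2$ system for $\chi$ is non-degenerate because $\int_0^1(\hat\mu_1/\hat\sigma^2)(\theta,s)\,ds>0$ for all $\theta>0$ (indeed $\hat\mu_1-\hat\mu_0$ is a strictly positive multiple of $F(\beta^-)$), so the periodicity constraint $u(0)=u(1)$ determines $J$ in terms of $u(0)$ nontrivially.
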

\begin{proof}
W.l.o.g. we only consider the case of $\hat X\,\texttt{mod}\,1$. 
First, we notice that the assumptions on $F$ and $\sigma$ imply that $\hat\mu_i(\theta,\cdot)/\hat\sigma^2(\theta,\cdot)\in C^{1+\epsilon}([0,1])$, for $i=0,1$.
Then, for any $c>0$, Theorem 6.5.3 in \cite{Krylov} yields the existence and uniqueness of $\hat\sigma^2\psi\in C^{2+\epsilon}([0,1])$ satisfying the ODE in \eqref{eq.psi.def.cond} with the boundary conditions $\hat\sigma^2(\theta,0^+)\psi(0^+)=\hat\sigma^2(\theta,1^-)\psi(1^-)=c$. Moreover, the maximum principle (or the Feynman-Kac formula) implies that $\psi>0$. Hence, choosing $c>0$ appropriately, we can ensure that $\int_0^1 \psi(x)=1$. Thus, we have shown the existence and uniqueness of the solution to \eqref{eq.psi.def.cond}.

By choosing an arbitrary $f\in C^{2}([0,1])$, satisfying $f(0)=f(1)=0$ and $f'(0^+)=f'(1^-)$, applying It\^o's formula to $f\circ (\cdot\,\texttt{mod}\,1)(\hat X)$, integrating by parts, and using $(\hat\sigma^2\psi)(0^+)=(\hat\sigma^2\psi)(1^-)$, along with the ODE \eqref{eq.psi.def.cond} and the dominated convergence, we show that 
\begin{equation}\label{eq.ergod.pf.1}
\left.\frac{d}{dt}\int_0^1\EE f(\hat X_t(x)\,\texttt{mod}\,1)\psi(x)dx\,\right|_{t=0}=0,
\end{equation}
where $\hat X(x)$ is the solution to \eqref{eq.Xhat.def} with $X_0=x$.
As follows from Theorem 5.4.20 and Remark 5.4.21 in \cite{KaratzasShreve}, $\{\hat X(x)\}_{x\in\RR}$ is a Markov family with the transition denoted by $K(x,A)$. Due to periodicity of the coefficients in \eqref{eq.Xhat.def}, we have $K(x+n,A+n)=K(x,A)$. Then, it is easy to see that $\{\hat X(x)\,\texttt{mod}\,1\}_{x\in[0,1)}$ is a Markov family with the transition kernel $\sum_{n=-\infty}^\infty K(x,A+n)$.
The Markov property and \eqref{eq.ergod.pf.1} imply that $\psi$ is stationary.

To show uniqueness of the stationary distribution of $\hat X\,\texttt{mod}\,1$, consider any such distribution and use the scale function transformation, along with the continuous differentiability and Gaussian estimates for the fundamental solution of a linear (strictly) parabolic PDE with Lipschitz coefficients (see, e.g., \cite{Friedman}), to conclude that the stationary distribution has density $\psi\in C^1([0,1])$, s.t. $\psi(0)=\psi(1)$. Using It\^o's formula, we show that $\psi$ is a weak solution to the ODE in \eqref{eq.psi.def.cond} on $(0,1)$, with the test functions in $C_0^2((0,1))$.
%\begin{equation*}
%\int_0^1 f'(x)\left[\frac{1}{2}\partial_x\left(\hat\sigma^2(\theta,x)\psi(x)\right) - \frac{\hat\mu_0(\theta,x)}{\hat\sigma^2(\theta,x)}\hat\sigma^2(\theta,x)\psi(x)\right] dx=0.
%\end{equation*}
%Since for any $[a,b]\subset (0,1)$ and any $g\in C^2([a,b])$, there exists $f\in C_0^2((0,1))$, s.t. $g=f'$ on $[a,b]$, we conclude that $\partial_x\left(\hat\sigma^2(\theta,\cdot)\psi\right)\in L^1_{loc}((0,1))$.
Using the weak form of the ODE \eqref{eq.psi.def.cond}, we improve the regularity and conclude that $\hat\sigma^2(\theta,\cdot)\psi \in C^{2+\epsilon}((0,1))$, and, in turn, that the ODE \eqref{eq.psi.def.cond} holds in classical sense.
%Using the boundedness of $\psi$, we deduce that $\psi'$ is absolutely bounded by a constant on $(0,1)$ and, hence, the limits $\psi(0^+)=:\psi(0)$ and $\psi(1^-)=:\psi(1)$ are well defined. Assume that these limits do not coincide. Then, choosing an arbitrary $f\in C^{2}([0,1])$, satisfying $f(0)=f(1)=0$ and $f'(0^+)=f'(1^-)\neq0$, applying It\^o's formula to $f\circ (\cdot\,\texttt{mod}\,1)(\hat X)$, and integrating by parts, we conclude that \eqref{eq.ergod.pf.1} does not hold (for any $t\geq0$), which is a contradiction.
Thus, the first part of the proof yields uniqueness of the stationary distribution.

Finally, to obtain the last statement of the lemma, it is a standard exercise to check that the families of measures
$$
\QQ_T(dx):=\frac{1}{T} \int_0^T \PP(\hat X_t\,\texttt{mod}\,1\,\in dx) dt,
\quad \tilde\QQ_T(dx):=\PP(\hat X_T\,\texttt{mod}\,1\,\in dx),
\quad x\in [0,1],
$$
parameterized by $T\geq0$, are tight and that each of their limit points (in weak topology), as $T\rightarrow\infty$, is a stationary distribution of $\hat X\,\texttt{mod}\,1$. Since such distribution is unique, we obtain the statement of the lemma for bounded continuous $G$. As the stationary distribution has no atoms, this statement is extended to all bounded Borel-measurable $G$.
\qed
\end{proof}

\begin{proposition}\label{prop:impact.inf.L}
For any $Q>0$ and $\theta>0$, the limit in \eqref{eq.Iinf.def} is well defined, and we have:
\begin{equation}\label{eq.I0.rep.1}
I(Q,\theta)=\int_0^1 \EE \left(\lceil \hat Y_{Q}(x)\rceil - 1 \right) \psi(x)\,dx,
\end{equation}
where $\psi$ is the density of the stationary distribution of $\hat X\,\texttt{mod}\,1$.
\end{proposition}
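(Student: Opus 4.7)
The plan is to start from the representation established in Proposition~\ref{prop:I.via.tildeI}, namely
$$
I_L(Q,\theta)=\EE\!\left(\lceil \hat Y_Q(\hat X_\eta)\rceil - \lceil \hat X_\eta\rceil\right),
$$
and, using independence of $\eta\sim U(0,L)$ from $(\hat X,\hat Y)$, rewrite this as a time average:
$$
I_L(Q,\theta)=\frac{1}{L}\int_0^L \EE\!\left(\lceil \hat Y_Q(\hat X_t)\rceil - \lceil \hat X_t\rceil\right) dt.
$$
The next step is to exploit the fact that the coefficients $\hat\mu_1(\theta,\cdot)$ and $\hat\sigma(\theta,\cdot)$ are $1$-periodic (inherited from the periodicity of $\beta^\pm$ and $\sigma$). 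Writing $x=n+z$ with $n=\lfloor x\rfloor\in\ZZ$ and $z=x\,\texttt{mod}\,1\in[0,1)$, one has the equality in law $\hat Y_\cdot(n+z)-n\stackrel{d}{=}\hat Y_\cdot(z)$; consequently, for $z\in(0,1)$,
$$
\lceil \hat Y_Q(n+z)\rceil-\lceil n+z\rceil\stackrel{d}{=}\lceil \hat Y_Q(z)\rceil-1.
$$
Since $\hat X_t$ has a density (by non-degeneracy of $\hat\sigma$) and therefore does not charge $\ZZ$, we may define $G:[0,1]\to\RR$ by $G(z):=\EE\!\left(\lceil \hat Y_Q(z)\rceil-1\right)$ for $z\in(0,1)$ and obtain, by independence of $\hat X$ and $\hat Y$,
$$
I_L(Q,\theta)=\frac{1}{L}\int_0^L \EE\, G(\hat X_t\,\texttt{mod}\,1)\, dt.
$$

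To invoke the ergodic statement of Lemma~\ref{le:ergodicity}, which is formulated for bounded Borel $G$, I need a uniform bound on $G$. The standing assumptions on $F$ and $\sigma$ (in particular $\inf_{x\in[0,1]}(F(x-1)+F(-x))>0$, $F\in C^{1+\epsilon}$, and $\sigma$ bounded above) guarantee that $\hat\mu_1(\theta,\cdot)$ and $\hat\sigma(\theta,\cdot)$ are bounded. A standard Gronwall/Burkholder estimate then yields $\EE|\hat Y_Q(z)-z|\leq C$ uniformly in $z$, whence $|G(z)|\leq C'$ uniformly for $z\in(0,1)$. Lemma~\ref{le:ergodicity} now directly gives
$$
\lim_{L\to\infty}\frac{1}{L}\int_0^L \EE\, G(\hat X_t\,\texttt{mod}\,1)\,dt=\int_0^1 G(z)\,\psi(z)\,dz,
$$
which is precisely \eqref{eq.I0.rep.1}.

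There is no deep obstacle here: the argument is essentially a combination of Proposition~\ref{prop:I.via.tildeI}, the shift-invariance of the SDE for $\hat Y$, and the ergodic theorem supplied by Lemma~\ref{le:ergodicity}. The one point requiring minimal care is verifying boundedness of $G$, for which the assumed lower bound on $F(-\beta^+)+F(\beta^-)$ (preventing blow-up of the drift/diffusion of $\hat Y$) is exactly what is needed. A second minor point, easy to handle, is that the map $z\mapsto G(z)$ need only be defined on $(0,1)$ since the law of $\hat X_t\,\texttt{mod}\,1$ places no mass at the endpoints.
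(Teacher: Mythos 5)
Your proof is correct and follows essentially the same route as the paper's: rewrite $I_L$ as a time average via the uniform law of $\eta$, use the $1$-periodicity of the coefficients of $\hat Y$ to reduce the integrand to a function of $\hat X_t\,\texttt{mod}\,1$, and invoke the ergodicity from Lemma~\ref{le:ergodicity}. The only difference is that you spell out two details the paper leaves implicit — the precise form of the shift-invariance ($\hat Y_\cdot(n+z)-n\stackrel{d}{=}\hat Y_\cdot(z)$) and the boundedness of $G$ needed to apply the ergodic lemma — which is a sound clarification rather than a different argument.
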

\begin{proof}
Notice that, for any $x\in\RR$ and any integer $n$, $\hat Y(x+n)=\hat Y(x)$.
Then, using the independence of $\hat X$, $\hat Y$ and $\eta$, and the uniform distribution of $\eta$, we have:
$$
I_{L}(Q) 
= \EE \left(\lceil \hat Y_Q(\hat X_{\eta})\rceil - \lceil \hat X_{\eta}\rceil\right)
=  \EE \frac{1}{L}\int_0^L \left(\lceil \hat Y_Q(\hat X_{s}\,\texttt{mod}\,1)\rceil - \lceil \hat X_{s}\,\texttt{mod}\,1\rceil\right) ds
= \EE \frac{1}{L}\int_0^L G\left(\hat X_{s}\,\texttt{mod}\,1\right) ds,
$$
where
$$
G(x):= \EE \left(\lceil \hat Y_{Q}(x)\rceil - \lceil x\rceil \right).
$$
Using the ergodicity of $X\,\texttt{mod}\,1$ (see Lemma \ref{le:ergodicity}),
$$
\EE \frac{1}{L}\int_0^L G\left(\hat X_{s}\,\texttt{mod}\,1\right) ds \rightarrow
\int_0^1 G(x) \psi(x) dx.
$$
%where $\psi$ is the stationary density of $\hat X\,\texttt{mod}\,1$.
\qed
\end{proof}

\subsection{Marginal expected impact}
\label{subse:ImpactWings}

In this subsection, we express the marginal expected price impact of a VWAP meta-order (i.e., the derivative of the expected price impact w.r.t. the size $Q$ of the meta-order) through the wings of the stationary distribution of the fundamental price run on the business clock. This connection is one of the key steps in establishing the desired concavity of the expected price impact, as described in Section \ref{se:intro}. The target representation is derived for $Q=0$ (Proposition \ref{prop:marginalimpact.at.zero}) and for $Q=\infty$ (Proposition \ref{prop:Iq.lim}).

\medskip

First, we analyze the asymptotic behavior of $\partial_Q I(Q,\theta)$ as $Q\downarrow0$.

\begin{proposition}\label{prop:marginalimpact.at.zero}
For any $\theta>0$,
$$
\lim_{Q\downarrow0}\partial_{Q}I(Q,\theta) = \alpha\,\psi(1^-).
$$
\end{proposition}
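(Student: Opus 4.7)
The plan is to derive a short-$Q$ expansion $I(Q,\theta)=\alpha\psi(1^-)Q+O(Q^{3/2})$ by a boundary-layer analysis of the representation from Proposition \ref{prop:impact.inf.L}, and then conclude using smoothness of $I(\cdot,\theta)$ on $(0,\infty)$. Writing $I(Q,\theta)=\int_0^1 u(Q,x)\psi(x)dx$ with $u(Q,x):=\EE(\lceil \hat Y_Q(x)\rceil-1)$, the first step is to use the boundedness of $\hat\mu_1,\hat\sigma$ (and the fact that $\hat\sigma$ is uniformly positive) together with Gaussian tail bounds to show that, for $x\in(0,1)$, crossings of more than one integer contribute only $O(e^{-c/Q})$; hence
$$u(Q,x)=\PP(\hat Y_Q(x)>1)-\PP(\hat Y_Q(x)\le 0)+O(e^{-c/Q})$$
uniformly in $x\in(0,1)$.

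The second step is to approximate the law of $\hat Y_Q(x)-x$ by a Gaussian with mean $\hat\mu_1(\theta,x)Q$ and variance $\hat\sigma^2(\theta,x)Q$, using the standard short-time parametrix expansion for diffusions with H\"older coefficients. Substituting the resulting normal-c.d.f.\ approximations into $I(Q,\theta)$ and performing the boundary-layer substitutions $x=1-z\sqrt Q$ in the integral involving $\PP(\hat Y_Q(x)>1)$ and $x=z\sqrt Q$ in the one involving $\PP(\hat Y_Q(x)\le 0)$ isolates the dominant contributions from the $\sqrt Q$-neighborhoods of $x=1$ and $x=0$. Taylor-expanding $\psi$, $\hat\sigma(\theta,\cdot)$, and $\hat\mu_1(\theta,\cdot)$ around these endpoints and evaluating the resulting Gaussian integrals (using $\int_0^\infty(1-\Phi(u))du=1/\sqrt{2\pi}$ and $\int_0^\infty\Phi'(u)du=1/2$, with $\Phi$ the standard normal c.d.f.) yields
$$I(Q,\theta)=\frac{\sqrt Q}{\sqrt{2\pi}}\bigl[\psi(1^-)\hat\sigma(\theta,1^-)-\psi(0^+)\hat\sigma(\theta,0^+)\bigr]+\frac{Q}{2}\bigl[\psi(1^-)\hat\mu_1(\theta,1^-)+\psi(0^+)\hat\mu_1(\theta,0^+)\bigr]+O(Q^{3/2}).$$

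Three identities then collapse this expansion: (a) $\psi(0^+)=\psi(1^-)$ by the periodic boundary condition in \eqref{eq.psi.def.cond}; (b) $\hat\sigma(\theta,0^+)=\hat\sigma(\theta,1^-)$, since by periodicity $\sigma(0)=\sigma(1)$ and $F(-\beta^+(y))+F(\beta^-(y))$ equals $F(0)+F(-1)=1/2$ at both $y=0^+$ and $y=1^-$; and (c) a direct computation from the definition of $\hat\mu_1$ gives $\hat\mu_1(\theta,1^-)=\alpha/\theta$ and $\hat\mu_1(\theta,0^+)=2\alpha-\alpha/\theta$, so $\hat\mu_1(\theta,1^-)+\hat\mu_1(\theta,0^+)=2\alpha$. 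Thus the $\sqrt Q$ coefficient vanishes and the $Q$ coefficient simplifies to $\alpha\psi(1^-)$, giving $I(Q,\theta)=\alpha\psi(1^-)Q+O(Q^{3/2})$. Since $\hat Y(x)$ is a uniformly elliptic diffusion with sufficiently regular coefficients, $Q\mapsto I(Q,\theta)$ is smooth on $(0,\infty)$, so differentiating the expansion yields $\partial_Q I(Q,\theta)=\alpha\psi(1^-)+O(\sqrt Q)$ and hence the stated limit.

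The main obstacle is the uniform validity of the Gaussian approximation of the second step on the boundary-layer scale: the discontinuity of $\hat\mu_1$ at integer endpoints complicates the parametrix construction, and obtaining an $O(Q)$ error uniform over $x\in(1-C\sqrt Q,1)$ (and the analogous layer near $0$) requires either a careful parametrix estimate or a Girsanov-type change of measure to separately control the contributions from the two sides of each integer.
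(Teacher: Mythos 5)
Your approach is essentially the paper's: you start from $I(Q,\theta)=\int_0^1 \EE(\lceil\hat Y_Q(x)\rceil-1)\psi(x)\,dx$ (Proposition \ref{prop:impact.inf.L}), discard multiple-integer crossings, replace $\hat Y_Q(x)-x$ by its short-time Gaussian surrogate, and resolve the leading order by boundary-layer rescalings near the endpoints. The paper organizes this a bit differently: it first reflects the $\PP(\hat Y_Q(x)\leq 0)$ integral via $x\mapsto 1-x$ (using $\hat\sigma(\theta,x)=\hat\sigma(\theta,1-x)$), folding everything into a single layer at $x=1$, and then applies the mean-value theorem to the difference of the two $\Phi$'s. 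Keeping two separate layers, as you do, is fine, but it obliges you to track the $O(Q)$ terms more carefully, and there your stated expansion is incomplete: the first-order Taylor corrections of $\psi(\cdot)$ and $\hat\sigma(\theta,\cdot)$ at the two endpoints \emph{also} contribute at order $Q$, not $Q^{3/2}$ (e.g.\ $\sqrt{Q}\,\psi'(1^-)\!\int_0^\infty z\,\Phi(-z/\hat\sigma(\theta,1^-))\,dz\cdot\sqrt{Q}$ is $O(Q)$). These extra terms do cancel, but only because $\psi$ and $\hat\sigma(\theta,\cdot)$ are symmetric about $x=1/2$ (so $\psi'(1^-)=-\psi'(0^+)$ and $\hat\sigma'(\theta,1^-)=-\hat\sigma'(\theta,0^+)$, which follows from the antisymmetry of $\hat\mu_0(\theta,\cdot)$ about $1/2$ together with uniqueness of $\psi$); the endpoint equalities $\psi(0^+)=\psi(1^-)$ and $\hat\sigma(\theta,0^+)=\hat\sigma(\theta,1^-)$ are not enough by themselves. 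You should state and use this full symmetry; otherwise the displayed expansion is not justified.

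Two smaller errors. In (b), $F(0)+F(-1)=\tfrac12+F(-1)\neq\tfrac12$ in general; the equality $\hat\sigma(\theta,0^+)=\hat\sigma(\theta,1^-)$ still holds (since $\sigma(0^+)=\sigma(1^-)$ and $F(y-1)+F(-y)$ has equal one-sided limits at $0$ and $1$), but the value is not what you wrote. In (c), $\hat\mu_1(\theta,1^-)=\alpha/\theta$ and $\hat\mu_1(\theta,0^+)=2\alpha-\alpha/\theta$ hold only when $F(-1)=0$; in general
$$
\hat\mu_1(\theta,1^-)=\alpha\,\frac{2\theta F(-1)+\tfrac12-F(-1)}{\theta\bigl(\tfrac12+F(-1)\bigr)},
\qquad
\hat\mu_1(\theta,0^+)=\alpha\,\frac{\theta+F(-1)-\tfrac12}{\theta\bigl(F(-1)+\tfrac12\bigr)},
$$
though the sum $2\alpha$ (indeed $\hat\mu_1(\theta,y)+\hat\mu_1(\theta,1-y)\equiv 2\alpha$) is correct, and it is only the sum that enters. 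Your closing observation that smoothness of $Q\mapsto I(Q,\theta)$ lets you differentiate the expansion is a useful point that the paper leaves implicit, and the technical obstacle you flag (a uniform Gaussian parametrix across the drift discontinuity at integers) is likewise glossed over in the paper, which simply asserts the approximation with a uniform $(1+o(1))$.
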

\begin{proof}
Since the drift and volatility of $\hat Y(x)$ are bounded and continuous and the volatility is bounded away from zero, as $Q\rightarrow0$, we have, uniformly over $x\in[0,1]$:
$$
\EE \left(\lceil \hat Y_{Q}(x)\rceil - 1 \right)
= \left(\PP\left(x + \hat\mu_1(\theta,x) Q + \hat\sigma(\theta,x)\sqrt{Q} \hat W_{1} \geq 1\right)\right.
$$
$$
\left.
- \PP\left(x + \hat\mu_1(\theta,x)Q + \hat\sigma(\theta,x)\sqrt{Q} \hat W_{1} \leq 0\right)
\right)(1 + o(1))
$$
$$
=\left( \Phi\left( \frac{x-1+ \hat\mu_1(\theta,x)Q}{\hat\sigma(\theta,x)\sqrt{Q}} \right)
- \Phi\left( -\frac{x+ \hat\mu_1(x)Q}{\hat\sigma(\theta,x)\sqrt{Q}} \right)
\right)(1 + o(1)),
$$
where $\Phi$ is the standard normal c.d.f..
Then, using the continuity of $\psi$, the conditions $\psi(0^+)=\psi(1^-)$ and $\hat\sigma(\theta,x)=\hat\sigma(\theta,1-x)$, for $x\in(0,1)$, as well as the mean value theorem and the dominated convergence, we obtain, as $Q\rightarrow0$:
$$
\int_0^1 \EE \left(\lceil \hat Y_{Q}(x)\rceil - 1 \right) \psi(x)\,dx
$$
$$
= \int_0^1 \left(\Phi\left( \frac{x-1+ \hat\mu_1(\theta,x)Q}{\hat\sigma(\theta,x)\sqrt{Q}} \right)\psi(x) 
-\Phi\left( \frac{x-1-\hat\mu_1(\theta,1-x)Q}{\hat\sigma(\theta,x)\sqrt{Q}} \right) \psi(1-x)\right)\,dx\,(1 + o(1))
$$
$$
= \sqrt{Q} \int_{-1/\sqrt{Q}}^0 \left[\Phi\left( \frac{x}{\hat\sigma(\theta,1+x\sqrt{Q})} + \frac{\hat\mu_1(\theta,1+x\sqrt{Q})\sqrt{Q}}{\hat\sigma(\theta,1+x\sqrt{Q})} \right) \psi(1+x\sqrt{Q})\right.
$$
$$
\left. -\Phi\left( \frac{x}{\hat\sigma(\theta,1+x\sqrt{Q})} - \frac{\hat\mu_1(\theta,-x\sqrt{Q})\sqrt{Q}}{\hat\sigma(\theta,1+x\sqrt{Q})}\right)
\psi(-x\sqrt{Q}) \right]\,dx\,(1 + o(1))
$$
$$
= Q \psi(1^-) \frac{\hat\mu_1(\theta,1^-)+\hat\mu_1(\theta,0^+)}{\hat\sigma(\theta,1^-)} \int_{-\infty}^0 \phi\left( \frac{x}{\hat\sigma(\theta,1^-)}\right) dx\,
(1 + o(1))
$$
$$
= Q \frac{\psi(1^-)(\hat\mu_1(\theta,1^-)+\hat\mu_1(\theta,0^+))}{2} \,(1 + o(1))
= Q \alpha\,\psi(1^-) \,(1 + o(1)).
$$
%{\color{red}(Need to make the last equality rigorous.)}
%Let us analyze the first integral in the right hand side of the above:
%$$
%\int_0^1 \Phi\left( \frac{x-1+ \hat\mu(x)Q/\theta}{\hat\sigma(x)\sqrt{Q/\theta}} \right) \psi(x)\,dx
%=\sqrt{Q/\theta}\int_{-\sqrt{\theta/Q}}^0 \Phi\left( \frac{x+ \hat\mu(1+x\sqrt{Q/\theta})\sqrt{Q/\theta}}{\hat\sigma(1+x\sqrt{Q/\theta})} \right) \psi\left(1+x\sqrt{Q/\theta}\right)\,dx
%$$
%$$
%=\sqrt{Q/\theta}\int_{-\infty}^0 \Phi\left( \frac{x}{\hat\sigma(1)} \right) \psi\left(1\right)\,dx\,(1+o(1))
%$$
\qed
\end{proof}

%\begin{remark}
%Typically $x\mapsto F(x-1) + F(-x)$ is non-increasing in $[0,1/2]$. {\color{red}(Need to explain why. It corresponds to convexity of $F$ in $(-1,0)$.)} Hence, the two extreme cases of this function correspond to $F$ being $U(-1,1)$ and to $F$ being a Dirac measure at $0$. In the former case,
%$$
%I'(0) = 2 C(\alpha,\sigma,\theta,\gamma,F) \,\psi(1^-),
%$$
%and in the latter,
%$$
%I'(0) = \infty
%$$
%(where we used again the symmetry of $\psi$).
%\end{remark}

\medskip

Similarly, we can analyze $\partial_Q I(Q,\theta)$ as $Q\rightarrow\infty$. Notice that, due to the Markov property of $\hat Y$ (see analogous argument for the Markov property of $\hat X$ in the proof of Lemma \ref{le:ergodicity}) and the periodicity of the coefficients in \eqref{eq.Y.def}, we have:
\begin{align*}
&I(Q+\Delta Q,\theta)-I(Q,\theta)=\lim_{L\rightarrow\infty} \left(I_L(Q+\Delta Q,\theta)-I_L(Q,\theta)\right)\\
&= \lim_{L\rightarrow\infty} \EE\left(\lceil \hat Y_{Q+\Delta Q}(\hat X_{\eta})\rceil - \lceil \hat Y_{Q}(\hat X_{\eta})\rceil \right)
= \lim_{L\rightarrow\infty} \EE\left(\lceil \hat Y_{\Delta Q}(R_Q(\hat X_{\eta}))\rceil - \lceil R_Q(\hat X_{\eta})\rceil \right)\\
&= \lim_{L\rightarrow\infty} \EE\left(\lceil \hat Y_{\Delta Q}(Z)\rceil - 1 \right),
\end{align*}
where $R(x)\sim \hat Y(x)\,\texttt{mod}\,1$ and $Z\sim R_Q(\hat X_{\eta})\,\texttt{mod}\,1$ are independent of $(\hat X,\hat Y)$.
Repeating the proof of Proposition \ref{prop:impact.inf.L}, we obtain
$$
\lim_{L\rightarrow\infty} \EE\left(\lceil \hat Y_{\Delta Q}(Z)\rceil - 1 \right)
= \int_0^1 \EE\left(\lceil \hat Y_{\Delta Q}(R_{Q}(x))\rceil - 1 \right) \psi(x)dx.
$$
%Applying the scale function transformation and using the existence and regularity results for the fundamental solution of a linear (strictly) parabolic PDE with Lipschitz coefficients (see {\color{red}[Friedman1964]}), we deduce that $\hat Y^x_{Q}$ has a density

\smallskip

Applying the scale transformation to $\hat Y(x)$, to eliminate the drift, it is easy to see that the density of $\hat Y_t(x)$, denoted $\bar\chi^x_{t}$, can be written as
$$
\bar \chi^x_t(y) = \Gamma(t,x,y)P(y),\quad t>0,\,x,y\in\RR,
$$
where $\Gamma(\cdot,x,\cdot)\in C^{1+\varepsilon,1+\varepsilon}$, with some $\varepsilon\in(0,1)$, is the fundamental solution of the parabolic PDE associated with the transformed $\hat Y^x$, and $P$ is an exponentially bounded Lipschitz-continuous function whose derivative is continuous everywhere except integers, where it has first order discontinuities. A direct computation shows that 
\begin{equation}\label{eq.barchi.PDE}
\partial_t \bar \chi^x - \frac{1}{2}\partial^2_y\left(\hat\sigma^2 \bar \chi^x\right) + \partial_y\left(\hat\mu_1 \bar \chi^x\right)=0,
\end{equation}
where the equation holds globally in $(t,y)\in(0,\infty)\times\RR$ in a weak sense and pointwise (with all derivatives being well defined) everywhere except $(0,\infty)\times \ZZ$, with the left and the right limits being well defined at every integer $y$.
Then, applying the Gaussian estimates for $\Gamma$, it is easy to see that, for any $t>0$, the distribution of $R_{t}(x)\sim \hat Y_{t}(x)\,\texttt{mod}\,1$ has density
$$
\chi^x_{t}(y) = \sum_{n\in\ZZ} \bar \chi^x_t(y+n),\quad y\in[0,1)
$$ 
%s.t. $\chi^x_{\cdot}(\cdot)\in C^{1+\varepsilon,1+\varepsilon}((0,\infty)\times[0,1])$, for some $\varepsilon\in(0,1)$, $\chi^x_{t}(0^+)=\chi^x_{t}(1^-)$, {\color{red}and $\|\chi^x_t\|_{C^1}$ is bounded uniformly over $x\in[0,1]$ and $t\geq1$} 
(a similar argument was used in the proof of Lemma \ref{le:ergodicity}).
Due to periodicity of the coefficients, we deduce that $\chi^x$ satisfies \eqref{eq.barchi.PDE} in the same sense as $\bar \chi^x$.

\smallskip

Repeating the proof of Proposition \ref{prop:marginalimpact.at.zero}, we obtain, as $\Delta Q\rightarrow0$:
$$
\EE\left(\lceil \hat Y_{\Delta Q}(R_{Q}(x))\rceil - 1 \right)
=\int_0^1 \EE \left(\lceil \hat Y_{\Delta Q}(y)\rceil - 1 \right) \chi^x_{Q}(y)\,dy
= \Delta Q \alpha\,\chi^x_{Q}(1^-) \,(1 + o(1)),
$$
for every $x\in(0,1)$. Thus, using the dominated convergence theorem, we conclude:
\begin{equation}\label{eq.IbarV}
\partial_{Q}I(Q,\theta) = \alpha\,\int_0^1 \chi^x_{Q}(1^-) \psi(x) dx.
\end{equation}
The following proposition describes the asymptotic behavior of $\partial_{Q}I$ for large $Q$.

\begin{proposition}
\label{prop:Iq.lim}
For any $\gamma,\theta>0$, 
$$
\lim_{Q\rightarrow\infty}\partial_{Q}I(Q,\theta) = \alpha\,\chi(1^-),
$$
with $\chi$ defined in Lemma \ref{le:ergodicity}.
\end{proposition}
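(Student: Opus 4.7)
The natural starting point is \eqref{eq.IbarV}:
$$\partial_{Q}I(Q,\theta)\,=\,\alpha\int_0^1\chi^x_Q(1^-)\psi(x)\,dx.$$
Since $\psi$ is a probability density on $[0,1]$, the conclusion will follow from the dominated convergence theorem once the following two facts are established: (a) a uniform bound $\sup_{x\in[0,1],\,Q\geq 1}\chi^x_Q(1^-)<\infty$, and (b) the pointwise convergence $\chi^x_Q(1^-)\to\chi(1^-)$ as $Q\to\infty$, for each $x\in(0,1)$.

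For (a), I would use that $\chi^x_Q$ is the transition density of a uniformly elliptic diffusion with periodic, $C^{1+\epsilon}$ coefficients viewed on the circle $\RR/\ZZ$ (namely $\hat Y\,\texttt{mod}\,1$). Standard parabolic smoothing — or, equivalently, the Gaussian upper bounds on the fundamental solution $\Gamma$ combined with the exponential bound on $P$ noted just above the proposition — gives a finite constant $C$ such that $\chi^x_Q(y)\leq C$ for all $x\in[0,1]$, all $y\in[0,1]$, and all $Q\geq 1$.

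For (b), Lemma \ref{le:ergodicity} provides weak convergence of the measures $\chi^x_Q(y)\,dy$ to $\chi(y)\,dy$ on $[0,1]$ for every fixed $x$. I would upgrade this to convergence of the density at the boundary point $y=1^-$ by parabolic regularity: after the scale transformation (already used in the proof of Lemma \ref{le:ergodicity}) that eliminates the drift, the equation \eqref{eq.barchi.PDE} becomes uniformly parabolic with $C^{1+\epsilon}$ coefficients on the circle, and interior Schauder estimates applied on the time window $[Q-1,Q]$ produce a bound on $\chi^x_Q$ in $C^{1+\epsilon'}([0,1])$ (with some $\epsilon'\in(0,\epsilon)$) that is uniform in $Q\geq 1$ and $x\in[0,1]$. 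Arzel\`a--Ascoli then shows that every sequence $Q_n\to\infty$ admits a sub-subsequence along which $\chi^x_{Q_n}$ converges in $C([0,1])$; the weak convergence forces the limit to be $\chi$, and the uniqueness statement in Lemma \ref{le:ergodicity} promotes this to convergence of the full sequence. Evaluating at $y=1^-$ yields (b).

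The main technical obstacle is ensuring that the Schauder bound in step (b) holds with constants uniform in both $x$ and $Q\geq 1$, but this is routine once one observes that the PDE coefficients are periodic with $C^{1+\epsilon}$ norms independent of the initial point, so the standard interior parabolic estimates on the compact manifold $\RR/\ZZ$ apply with uniform constants. Combining (a) and (b) via dominated convergence then produces $\int_0^1\chi^x_Q(1^-)\psi(x)\,dx\,\to\,\chi(1^-)\int_0^1\psi(x)\,dx\,=\,\chi(1^-)$, as claimed.
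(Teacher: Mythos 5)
Your proof is correct, but it takes a genuinely different technical route from the paper. Both arguments start from \eqref{eq.IbarV} and ultimately combine weak convergence (from Lemma \ref{le:ergodicity}) with a uniform-in-time regularity bound and an Arzel\`a--Ascoli compactness step; the difference lies in \emph{where} the averaging over $x$ happens and \emph{which} regularity estimate supplies the equicontinuity. You fix $x$, establish pointwise convergence $\chi^x_Q(1^-)\to\chi(1^-)$ via interior Schauder estimates on the circle (which, one should note, must be applied to the divergence-form rewrite $\partial_t\chi - \partial_y\bigl(\tfrac{1}{2}\hat\sigma^2\partial_y\chi + (\tfrac{1}{2}(\hat\sigma^2)'-\hat\mu_1)\chi\bigr)=0$, since $\hat\sigma^2$ is only $C^{1+\epsilon}$ and the non-divergence expansion would require $(\hat\sigma^2)''$), and then pass to the limit by dominated convergence; the uniform $L^\infty$ bound in step (a) is most cleanly obtained not directly from the Gaussian bound on $\Gamma$ alone (which degrades as $Q\to\infty$ when summed against the exponentially bounded $P$), but via the Markov property $\chi^x_{Q}(y)=\int_0^1\chi^x_{Q-1}(z)\chi^z_1(y)\,dz\le\sup_{z,y}\chi^z_1(y)$, which caps the transition density for all $Q\ge1$ by its value at $Q=1$. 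The paper instead averages first, setting $u(t,y)=\int_0^1\chi^x_t(y)\psi(x)\,dx$, subtracts off the boundary value to get a zero-Dirichlet solution $v$, and then applies an $L^2$ energy estimate (Theorem III.2.1 of Ladyzhenskaya--Solonnikov--Uraltseva) giving a $t$-uniform bound on $\|\partial_y v(t,\cdot)\|_{L^2}$; the Sobolev embedding $H^1\hookrightarrow C^{1/2}$ then supplies equicontinuity of $\{u(t,\cdot)\}$ directly, so no separate $L^\infty$ bound or dominated-convergence argument is needed. The paper's route is leaner (one estimate on one averaged object) and demands slightly less coefficient regularity; your route gives a stronger, pointwise-in-$x$ conclusion and is more in the spirit of classical parabolic smoothing, at the cost of the extra step (a). Both are legitimate proofs.
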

\begin{proof}
Using Ito's formula, it is easy to see that $u(t,y)=\int_0^1\chi^x_t(y)\psi(x) dx$ is a weak solution to
\begin{equation}\label{eq.chi.PDE}
\partial_t u - \frac{1}{2}\partial^2_y\left(\hat\sigma^2 u\right) + \partial_y\left(\hat\mu_1 u\right)=0,
\quad y\in(0,1),\quad u(t,0)=u(t,1)=\int_0^1\chi^x_t(1^-)\psi(x)dx,\quad u(0,y)=\psi(y),
\end{equation}
with $u,\partial_y u \in L^2([0,T]\times[0,1])$.
Recalling that $\chi^x$ satisfies \eqref{eq.barchi.PDE} along $(0,\infty)\times\{1^-\})$, we deduce that $v(t,y):=u(t,y)-\int_0^1\chi^x_t(1^-)\psi(x) dx$ satisfies \eqref{eq.chi.PDE} with the same initial and with zero boundary conditions.
Applying Theorem III.2.1 in \cite{Lady} to $v$, we conclude that $\|\partial_y v(t,\cdot)=\partial_y u(t,\cdot)\|_{L^2}$ is bounded uniformly over $t\geq0$. This, in turn, yields uniform continuity of the family $\{u(t,\cdot)\}_{t\geq0}$. Since the weak limit of this family, as $t\rightarrow\infty$ is $\chi$ (see Lemma \ref{le:ergodicity}), we conclude that it is also a strong limit in the uniform norm. This, along with \eqref{eq.IbarV}, completes the proof of the proposition.
\qed
\end{proof}

\medskip

Thus, we have shown that the marginal expected impact on the midprice at the beginning of an execution sequence is proportional to the stationary distribution of $\hat X \,\texttt{mod}\,1$ at $1^-$. Similarly, we have shown that the marginal expected impact at the end of a sufficiently long execution sequence is proportional to the stationary distribution of $\hat Y\,\texttt{mod}\,1$ at $1^-$. In the next section, we show that, for small $\theta>0$, the former exceeds the latter, which proves the asymptotic concavity of the expected impact curve.

\subsection{Asymptotic concavity of price impact}
\label{subse:concavity}

In this subsection, we show the asymptotic concavity of the expected price impact of a VWAP meta-order: namely, for sufficiently small participation rates $\theta>0$, the marginal expected price impact at the beginning of the meta-order (i.e., for $Q\approx0$) is strictly larger than at its end (i.e., for $Q\approx\infty$). This is done using the representation of the marginal expected price impact via the wings of the stationary distribution, established in the previous subsection. The main result is summarized in Theorem \ref{thm:main}.

\smallskip

In view of Propositions \ref{prop:marginalimpact.at.zero} and \ref{prop:Iq.lim}, the asymptotic concavity of the expected price impact is equivalent to $\psi(1^-)>\chi(1^-)$.
To show the latter, we recall the ODEs \eqref{eq.psi.def.cond} and \eqref{eq.chi.def.cond} and, multiplying them by $\theta$, we deduce the existence and uniqueness of function $(\theta,x)\mapsto f(\theta,x)$, s.t. $f(\theta,\cdot)\in C^{2+\epsilon}([0,1])$ and
\begin{equation}\label{eq.chi.def.cond.2}
\frac{1}{2}\partial^2_x f(\theta,x) - \partial_x\left((\bar\mu_0(x) + \theta \bar\mu_1(x)) f(\theta,x)\right)=0,
\end{equation}
\begin{equation}\label{eq.chi.def.cond.3}
f(\theta,1)=f(\theta,0),\quad \int_0^1 \frac{f(\theta,x)}{\bar\sigma^2(x)}dx = 1,
\end{equation}
where
%$$
%\bar\sigma(y):= \frac{\sigma(y)}{\sqrt{\gamma (F(-\beta^+(y))+F(\beta^-(y)))}},
%$$
$$
\bar\sigma(y) := \sqrt{\theta}\hat\sigma(y) = \frac{\sigma(y)}{\sqrt{\gamma (F(y-1)+F(-y))}},
\quad\bar\mu_0(y) = \alpha\gamma\frac{F(y-1) - F(-y)}{\sigma^2(y)},
\quad \bar\mu_1(y):= 2\alpha\gamma \frac{F(-y)}{\sigma^2(y)},
$$
and we used
$$
\beta^+(y) = 1-y,\quad \beta^-(y)=-y,\quad y\in(0,1).
$$
%Note that, as follows from the proof of Lemma \ref{le:ergodicity}, $\chi^\theta$ is uniquely determined by \eqref{eq.chi.def.cond.2}--\eqref{eq.chi.def.cond.3}.
It is clear that $\psi=f(0,\cdot)/\bar\sigma^2$ and $\chi=f(\theta,\cdot)/\bar\sigma^2$. Our goal is to show that, for small enough $\theta>0$, we have $f(\theta,1)<f(0,1)$.
The next proposition establishes the desired result, but under two additional technical assumptions.

\smallskip

\begin{ass}\label{ass:1}
There exists a constant $\rho\geq1$, s.t.
$$
\sigma(x) = \rho \sqrt{\gamma (F(x-1)+F(-x)},\quad x\in(0,1).
$$
\end{ass}

Note that all standing assumptions on $\sigma$ are implied by the above assumption and the properties of $F$.

\begin{ass}\label{ass:2}
The function $F$ is log-concave in $[-1,0]$ and $F'$ is nondecreasing in this range.\footnote{Note that the log-concavity of $F$ can be ensured by requiring that the density of the distribution defined by $F$ is log-concave.} 
\end{ass}

\smallskip

\begin{proposition}\label{prop:ftheta.neg}
For any $x\in[0,1]$, there exists $\partial_\theta f(\cdot,x)\in C(\RR)$. Moreover, under Assumptions \ref{ass:1} and \ref{ass:2}, we have: $\partial_\theta f(0,1)<0$.
\end{proposition}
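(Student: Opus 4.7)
My plan is to leverage the special structure under Assumption \ref{ass:1} to reduce the sign of $\partial_\theta f(0,1)$ to the sign of a single explicit integral. Under Assumption \ref{ass:1}, one checks directly from the definitions that $\bar\sigma^2(y)\equiv\rho^2$, that $\bar\mu_0(y)+\bar\mu_1(y)\equiv\alpha/\rho^2$, and that $\bar\mu_0(1-y)=-\bar\mu_0(y)$. With $\bar\sigma^2$ constant, the ODE \eqref{eq.chi.def.cond.2} integrates once to the first-order linear equation
\begin{equation*}
\tfrac{1}{2}f'(\theta,x) - \bigl(\bar\mu_0(x)+\theta\bar\mu_1(x)\bigr) f(\theta,x) = \kappa(\theta)
\end{equation*}
for some constant of integration $\kappa(\theta)$, which I solve by the integrating-factor method to get
\begin{equation*}
f(\theta,x) = e^{U(x;\theta)}\left[f(\theta,0) + 2\kappa(\theta)\int_0^x e^{-U(y;\theta)}\,dy\right], \qquad U(x;\theta) := \Phi(x)+\theta\Psi(x),
\end{equation*}
with $\Phi(x):=2\int_0^x\bar\mu_0$ and $\Psi(x):=2\int_0^x\bar\mu_1$. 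The regularity claim $\partial_\theta f(\cdot,x)\in C(\RR)$ follows because the periodic boundary condition determines $\kappa(\theta)$ as a $C^\infty$ function of $f(\theta,0)$ and $\theta$, after which the normalization in \eqref{eq.chi.def.cond.3} determines $f(\theta,0)$ smoothly in $\theta$.

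For the sign, I differentiate the explicit formula at $\theta=0$. Using $\kappa(0)=0$, $\Phi(1)=0$, $\Psi(1)=2\alpha/\rho^2$, and the straightforward computation $\kappa'(0)=-\alpha f(0,0)/(\rho^2 B)$ with $B:=\int_0^1 e^{-\Phi}$, I obtain for $h(x):=\partial_\theta f(0,x)$ the representation
\begin{equation*}
h(x) = \Psi(x)\,f(0,x) + h(0)\,e^{\Phi(x)} + 2\kappa'(0)\,e^{\Phi(x)}\int_0^x e^{-\Phi(y)}\,dy.
\end{equation*}
Two consequences of Assumption \ref{ass:1} now drive the reduction. The symmetry $\Phi(1-x)=\Phi(x)$ (inherited from $\bar\mu_0(1-y)=-\bar\mu_0(y)$) yields $\int_0^1 x\,e^{\Phi}dx=A/2$ and $\int_0^1 e^{\Phi(x)}\int_0^x e^{-\Phi(y)}\,dy\,dx = AB/2$, where $A:=\int_0^1 e^{\Phi}$; the affine identity $\Psi(x)=2\alpha x/\rho^2-\Phi(x)$ (from $\bar\mu_0+\bar\mu_1\equiv\alpha/\rho^2$) converts $\int_0^1 \Psi\,e^{\Phi}$ into $\alpha A/\rho^2 - \int_0^1 \Phi\,e^{\Phi}$. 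Substituting these into the normalization-derived constraint $\int_0^1 h=0$, the contributions from $\Psi$ and from $\kappa'(0)$ cancel exactly, and one is left with
\begin{equation*}
h(0) = \frac{f(0,0)}{A}\int_0^1 \Phi(x)\,e^{\Phi(x)}\,dx.
\end{equation*}
By periodicity, $\partial_\theta f(0,1) = h(1) = h(0)$, so its sign matches the sign of $\int_0^1 \Phi\,e^{\Phi}$.

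The final step is to show $\int_0^1 \Phi\,e^{\Phi}<0$. Since $F$ is a strictly increasing CDF, for $y\in(0,1/2)$ one has $y-1<-y$ and therefore $\bar\mu_0(y)$, being proportional to $F(y-1)-F(-y)$, is strictly negative; by antisymmetry, $\bar\mu_0>0$ on $(1/2,1)$. Combined with $\Phi(0)=\Phi(1)=0$, this forces $\Phi<0$ strictly on $(0,1)$, so $\Phi\,e^{\Phi}<0$ pointwise there and the integral is negative, giving $\partial_\theta f(0,1)<0$. I expect the main technical obstacle to be precisely the cancellation step in the previous paragraph: without both the symmetry $\Phi(1-x)=\Phi(x)$ and the affine relation $\Psi=2\alpha x/\rho^2-\Phi$, the expression for $h(0)$ will not collapse to a single scalar moment of $\Phi$, and pinning down the sign would require finer information—presumably the log-concavity and monotonicity supplied by Assumption \ref{ass:2}.
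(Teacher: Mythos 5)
Your proof is correct, and it takes a genuinely different route from the paper. The paper differentiates the ODE/boundary problem for $f(\theta,\cdot)$ in $\theta$ to obtain a second-order ODE for $g(0,\cdot)=\partial_\theta f(0,\cdot)$ with source term $\partial_x(\bar\mu_1 f(0,\cdot))$, shows via Assumptions \ref{ass:1} and \ref{ass:2} (using $\rho\ge1$, log-concavity of $F$, and monotonicity of $F'$) that this source is pointwise negative, and then invokes the maximum principle together with $\int_0^1 g/\bar\sigma^2=0$ to force $g(0,1)<0$. You instead exploit that \eqref{eq.chi.def.cond.2} is in divergence form, integrate once to obtain a first-order linear ODE with an unknown constant $\kappa(\theta)$, and solve explicitly by integrating factor. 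Differentiating the explicit formula at $\theta=0$ and using (i) $\Phi(1-x)=\Phi(x)$, (ii) the affine identity $\Psi=2\alpha x/\rho^2-\Phi$ from Assumption \ref{ass:1}, and (iii) $\kappa(0)=0$, the constraint $\int_0^1 h=0$ collapses — I have verified the algebra, including $\int_0^1 x e^\Phi=A/2$, $\int_0^1 e^\Phi\int_0^x e^{-\Phi}=AB/2$, and $\kappa'(0)=-\alpha f(0,0)/(\rho^2 B)$ — to the clean scalar identity $h(0)=\frac{f(0,0)}{A}\int_0^1\Phi e^\Phi\,dx$. Since $\bar\mu_0<0$ on $(0,1/2)$ and $\Phi(0)=\Phi(1)=0$ with $\Phi$ symmetric about $1/2$, $\Phi<0$ strictly on $(0,1)$, so the integral is negative and $h(1)=h(0)<0$. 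Notably, your argument never uses Assumption \ref{ass:2}: the log-concavity of $F$ and monotonicity of $F'$ are what the paper needs to control the sign of $\partial_x(\bar\mu_1 f(0,\cdot))$ pointwise before applying the maximum principle, but your closed-form reduction bypasses that entirely. What the paper's maximum-principle route buys is robustness (it does not depend on being able to solve the ODE in closed form, and thus is better positioned to generalize beyond the constant-$\bar\sigma$ case), while your approach buys an explicit expression for $\partial_\theta f(0,1)$ and the observation that Assumption \ref{ass:2} is dispensable for this conclusion. One minor cosmetic point: the first integral of \eqref{eq.chi.def.cond.2} to $\tfrac12 f'-(\bar\mu_0+\theta\bar\mu_1)f=\kappa(\theta)$ does not actually require $\bar\sigma$ constant (the ODE is already a total $x$-derivative); it is only the subsequent cancellations that use Assumption \ref{ass:1}.
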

\begin{proof}
It is easy to see (e.g., using Feynman-Kac formula) that $f(\theta,x)$ is continuously differentiable in $\theta$.
Then, we differentiate \eqref{eq.chi.def.cond.2} and \eqref{eq.chi.def.cond.3} w.r.t. $\theta$ to obtain 
\begin{equation}\label{eq.g.ODE.theta}
\frac{1}{2}g_{xx} - (\bar\mu_0+\theta \bar\mu_1)g_x - (\bar\mu_0'+\theta \bar\mu_1')g = \partial_x(\bar\mu_1 f(0,\cdot)),
\quad g(\theta,0)=g(\theta,1),
\end{equation}
\begin{equation}\label{eq.g.int}
\int_0^1 \frac{g(\theta,x)}{\bar\sigma^2(x)} dx = 0,
\end{equation}
for  $g(\theta,x):=\partial_\theta f(\theta,x)$.
%{\color{red}Using the results by [Lady], we establish the well-posedness and regularity of the solution to \eqref{eq.g.ODE.theta}--\eqref{eq.g.int}. Then, we use Fubini's theorem, to show that $f^0(x) + \int_0^\theta g^z(x)dz$ satisfies \eqref{eq.chi.def.cond.2}--\eqref{eq.chi.def.cond.3}.}

Next, we consider the case $\theta=0$. The PDE \eqref{eq.chi.def.cond.2} and the property $\partial_xf(0,1/2)=\bar\mu_0(1/2)=0$ (which follows from the fact that $f(0,\cdot)$ is symmetric around $x=1/2$) yield:
$$
\partial_x f(0,x) = 2\bar\mu_0(x) f(0,x),\quad x\in(0,1).
$$
Then,
%{\color{red}
$$
\partial_x(\bar\mu_1 f(0,x))
= \bar\mu_1' f(0,x) + \bar\mu_1 \partial_x f(0,x)
= \left(\bar\mu_1' + 2 \bar\mu_1 \bar\mu_0 \right)f(0,x)
$$
$$
= \frac{\gamma f(0,x)}{\sigma^4} \left(-2\alpha F'(-x)\sigma^2 - 2\alpha F(-x) \partial_x \sigma^2 + 4\alpha\gamma(F(x-1)-F(-x))F(-x)\right)
$$
$$
= \frac{2\alpha\gamma^2 f(0,x)}{\sigma^4} 
\left(-\rho F'(-x)F(x-1)- \rho F'(-x)F(-x) - \rho F(-x)F'(x-1) \right.
$$
$$
\left.
+ \rho F(-x)F'(-x) + 2F(x-1)F(-x) -2F(-x)F(-x)\right)
$$
$$
\leq \frac{2\alpha\gamma^2 f(0,x)}{\sigma^4}
\left(- F'(-x)F(x-1) - F(-x)F'(x-1) + 2F(x-1)F(-x) -2F(-x)F(-x)\right)
$$
$$
= \frac{2\alpha\gamma^2 f(0,x)}{\sigma^4}
\left(-F'(-x)F(x-1) + F(-x)F'(x-1) - 2F(-x)F'(x-1) + 2F(-x)\int_{-x}^{x-1} F'(z)dz\right).
$$
The right hand side of the above is clearly non-positive for $x\in(0,1/2)$. For $x\in(1/2,1)$ its negativity is implied by the monotonicity of $F'$ on $\RR_-$ and by the log-concavity of $F$:
$$
(\log F)'(x-1) < (\log F)'(-x),
$$
$$
F(-x)F'(x-1) < F'(-x)F(x-1).
$$
%The above, in turn, is implied by the assumption that $F'/F$ is decreasing. Thus, $\partial_x(\bar\mu_1 f^0)<0$ is implied by the assumption that $F'$ is increasing and log-concave on $\RR_-$.
%} 
%{\color{red}It is important to ensure that $\partial_x(\bar\mu_1 f^0)<0$ and $\bar\mu_0'>0$ on $(0,1)$.}
%The assumptions of the proposition yield: $\bar\mu'>0$ on $(0,1/2)$.
Recall the ODE for $g(0,\cdot)$:
\begin{equation}\label{eq.g.ODE}
\frac{1}{2}g_{xx} - \bar\mu_0g_x - \bar\mu_0'g = \partial_x(\bar\mu_1 f(0,\cdot)),
\quad g(0,0)=g(0,1).
\end{equation}
The rest of the proof follows from the maximum principle. Indeed, the ODE in \eqref{eq.g.ODE} and the conditions $\partial_x(\bar\mu_1 f(0,\cdot))<0$, $\bar\mu_0'>0$ imply that $g(0,\cdot)$ cannot have a strictly negative minimum in $(0,1)$ (otherwise, the ODE cannot be satisfied at the minimum point of $g(0,\cdot)$). Then, if $g(0,1)\geq0$, we conclude that $g(0,\cdot)\geq0$, which contradicts \eqref{eq.g.int} (the case $g(0,\cdot)\equiv0$ is easily excluded, since $\bar\mu_1 f(0,\cdot)$ cannot be constant). Thus, we conclude that $g(0,1)<0$ and complete the proof of the proposition.
\qed
\end{proof}

\medskip

Thus, we have proved the main mathematical result of this paper.

\begin{theorem}
\label{thm:main}
Under Assumptions \ref{ass:1} and \ref{ass:2}, there exists $\varepsilon>0$, s.t.
$$
\lim_{Q\downarrow0}\partial_Q I(Q,\theta) > \lim_{Q\rightarrow\infty} \partial_Q I(Q,\theta),
$$
for all $\theta\in(0,\varepsilon)$.
\end{theorem}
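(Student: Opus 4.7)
The plan is to reduce the claimed strict inequality to the first-order sensitivity statement in Proposition \ref{prop:ftheta.neg}, by putting together the representations of the two marginal impacts that were already derived.

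First I would invoke Propositions \ref{prop:marginalimpact.at.zero} and \ref{prop:Iq.lim} to rewrite the target inequality as
$$
\alpha\,\psi(1^-) > \alpha\,\chi(1^-),
$$
where $\psi$ and $\chi$ are the stationary densities identified in Lemma \ref{le:ergodicity}. Since $\alpha>0$, this reduces to $\psi(1^-)>\chi(1^-)$. Using the identifications $\psi=f(0,\cdot)/\bar\sigma^2$ and $\chi=f(\theta,\cdot)/\bar\sigma^2$ established just after \eqref{eq.chi.def.cond.3}, and the fact that $\bar\sigma^2(1^-)$ is a strictly positive constant independent of $\theta$, the inequality is equivalent to $f(0,1)>f(\theta,1)$, that is,
$$
f(\theta,1)-f(0,1)<0.
$$

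Next I would exploit the continuous differentiability of $\theta\mapsto f(\theta,1)$, which is the first assertion in Proposition \ref{prop:ftheta.neg}. A first-order Taylor expansion in $\theta$ around $\theta=0$ yields
$$
f(\theta,1)-f(0,1)=\theta\,\partial_\theta f(0,1)+o(\theta),\quad \theta\downarrow0.
$$
The second conclusion of Proposition \ref{prop:ftheta.neg}, available precisely under Assumptions \ref{ass:1} and \ref{ass:2}, is that $\partial_\theta f(0,1)<0$. Setting $c:=-\partial_\theta f(0,1)>0$, there exists $\varepsilon>0$ such that the remainder $o(\theta)$ is smaller in absolute value than $c\theta/2$ for all $\theta\in(0,\varepsilon)$, hence
$$
f(\theta,1)-f(0,1)\leq -\tfrac{c}{2}\theta<0,\qquad \theta\in(0,\varepsilon).
$$
Combined with the reduction in the previous paragraph, this gives $\psi(1^-)>\chi(1^-)$ for all such $\theta$, and the theorem follows.

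There is essentially no obstacle left at this stage: the hard analytic work has been carried out in Propositions \ref{prop:marginalimpact.at.zero}, \ref{prop:Iq.lim}, and \ref{prop:ftheta.neg}, together with Lemma \ref{le:ergodicity}. The only thing to be careful about is to ensure that $\varepsilon$ is chosen uniformly in the sense that both $\partial_\theta f(\cdot,1)\in C(\mathbb{R})$ and the Taylor remainder estimate are valid on the whole interval $(0,\varepsilon)$; this is immediate from continuous differentiability at $\theta=0$. No further technical ingredients are needed.
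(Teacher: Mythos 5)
Your proof is correct and follows essentially the same route as the paper: reduce via Propositions \ref{prop:marginalimpact.at.zero} and \ref{prop:Iq.lim} to $\psi(1^-)>\chi(1^-)$, translate this to $f(\theta,1)<f(0,1)$ using the $\theta$-independence of $\bar\sigma^2$, and conclude for small $\theta$ from the continuous differentiability and strict sign of $\partial_\theta f(0,1)$ given by Proposition \ref{prop:ftheta.neg}. The explicit Taylor-remainder estimate you add is a minor expansion of what the paper leaves implicit.
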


\section{Empirical analysis}
\label{se:empirical}

\subsection{Numerical example}
\label{subse:numExample}
In this subsection, we present an analytically tractable model from the class described in the previous section. In addition to illustrating the tractability of the proposed setting, our goal herein is to compute numerically the stationary densities, the expected price impact of a VWAP meta-order, and the price resilience curve (Figures \ref{fig:1}--\ref{fig:2}).

\smallskip

Assume, for simplicity, that $\rho=1$ and that $F$ is the c.d.f. of a uniform distribution on $[-a,a]$, for some $a>1$.\footnote{The restriction $a>1$ in order to satisfy the assumptions on $F$ stated at the beginning of Section \ref{subse:finact.jumpdiff.def}. The main purpose of this technical assumption is to ensure that the rates of arrival of the buy an sell orders do not vanish.} Then, all assumptions made in Section \ref{se:math} are satisfied and, for $x\in[0,1]$,
$$
F(x) = \frac{1}{2a}(x+a),\quad F(x-1)+F(-x)=\frac{2a-1}{2a},
\quad \sigma^2(x) = \frac{\gamma(2a-1)}{2a}
$$
$$
\bar\sigma(x) =1,
\quad\bar\mu_0(x) = \frac{\alpha}{2a-1} (2x-1),
\quad \bar\mu_1(x):= \frac{2\alpha}{2a-1} (a-x).
$$
The ODE \eqref{eq.chi.def.cond.2} becomes
\begin{equation*}
\frac{1}{2}\partial^2_x f(\theta,x) - \frac{\alpha}{2a-1}\partial_x\left((2x(1-\theta) + 2\theta a -1) f(\theta,x)\right)=0.
\end{equation*}
To find the general solution of this ODE, we solve:
$$
u_x - \frac{2\alpha}{2a-1}(2x(1-\theta) + 2\theta a -1) u = C_1,
$$
$$
u(x) = g(x) \exp\left( \frac{\alpha}{2(2a-1)(1-\theta)}(2x(1-\theta) + 2\theta a -1)^2\right),
$$
$$
g'(x) = C_1 \exp\left( -\frac{\alpha}{2(2a-1)(1-\theta)}(2x(1-\theta) + 2\theta a -1)^2\right),
$$
$$
g(x) = C_1 \int_{-\infty}^x \exp\left( -\frac{2\alpha(1-\theta)}{2a-1}\left(y + \frac{2\theta a -1}{2(1-\theta)}\right)^2\right) dy + C_2
$$
$$
= C_3 \Phi\left( 2\sqrt{\frac{\alpha(1-\theta)}{2a-1}} \left(x + \frac{2\theta a -1}{2(1-\theta)}\right)\right) + C_2,
$$
$$
u(x) = \exp\left( \frac{\alpha}{2(2a-1)(1-\theta)}(2x(1-\theta) + 2\theta a -1)^2\right)
\left[ C_3 \Phi\left( 2\sqrt{\frac{\alpha(1-\theta)}{2a-1}} \left(x + \frac{2\theta a -1}{2(1-\theta)}\right)\right) + C_2 \right].
$$
The boundary conditions yield
$$
\exp\left( \frac{\alpha}{2(2a-1)(1-\theta)}(2\theta a -1)^2\right)
\left[ C_3 \Phi\left( 2\sqrt{\frac{\alpha(1-\theta)}{2a-1}} \frac{2\theta a -1}{2(1-\theta)}\right) + C_2 \right]
$$
$$
=\exp\left( \frac{\alpha}{2(2a-1)(1-\theta)}(2(1-\theta) + 2\theta a -1)^2\right)
\left[ C_3 \Phi\left( 2\sqrt{\frac{\alpha(1-\theta)}{2a-1}} \left(1 + \frac{2\theta a -1}{2(1-\theta)}\right)\right) + C_2 \right],
$$

$$
C_2=C_3 \frac{\exp\left( \frac{\alpha(1-2\theta + 2\theta a)^2}{2(2a-1)(1-\theta)}\right) \Phi\left( 2\sqrt{\frac{\alpha(1-\theta)}{2a-1}} \left(1 + \frac{2\theta a -1}{2(1-\theta)}\right)\right)
- \exp\left( \frac{\alpha(2\theta a -1)^2}{2(2a-1)(1-\theta)}\right) \Phi\left( 2\sqrt{\frac{\alpha(1-\theta)}{2a-1}} \frac{2\theta a -1}{2(1-\theta)}\right)}
{\exp\left( \frac{\alpha(2\theta a -1)^2}{2(2a-1)(1-\theta)}\right) - \exp\left( \frac{\alpha(1-2\theta + 2\theta a)^2}{2(2a-1)(1-\theta)}\right)}.
$$
Thus, we have
$$
f(\theta,x) = \frac{u(x)}{\int_0^1 u(y) dy},
$$
$$
u(x) = \exp\left( \frac{\alpha(2x(1-\theta) + 2\theta a -1)^2}{2(2a-1)(1-\theta)}\right)
\left[ 
\exp\left( \frac{\alpha(1- 2\theta + 2\theta a)^2}{2(2a-1)(1-\theta)}\right) \Phi\left( \sqrt{\frac{\alpha(1-\theta)}{2a-1}} \frac{1 - 2\theta + 2\theta a}{1-\theta}\right) \right.
$$
$$
\left.
- \exp\left( \frac{\alpha(2\theta a -1)^2}{2(2a-1)(1-\theta)}\right) \Phi\left( \sqrt{\frac{\alpha(1-\theta)}{2a-1}} \frac{2\theta a -1}{1-\theta}\right)
\right.
$$
$$
\left.
+\left( \exp\left( \frac{\alpha(2\theta a -1)^2}{2(2a-1)(1-\theta)}\right) - \exp\left( \frac{\alpha(1- 2\theta + 2\theta a)^2}{2(2a-1)(1-\theta)}\right) \right)
\Phi\left( 2\sqrt{\frac{\alpha(1-\theta)}{2a-1}} \left(x + \frac{2\theta a -1}{2(1-\theta)}\right)\right)\right].
$$
Figure \ref{fig:1} describes the shape of the stationary density $f(\theta,\cdot)$, for various values of $\theta$. We can see that, as predicted by the theoretical results (recall Proposition \ref{prop:ftheta.neg}), the value of the density at the boundary decreases as $\theta$ increases. Moreover, we see that the stationary density becomes more skewed toward the left, as $\theta$ increases. The latter indicates that, during the execution of a meta-order, the fundamental price is more likely to be closer to the best bid than to the best ask price, which is consistent with the phenomenon of ``improving liquidity" discussed in Section \ref{se:intro}. Indeed, if we interpret the fundamental price as the microprice, the fact that it is closer to the best bid price means that the volume of limit orders at the best ask is higher than the volume at the best bid, which implies better liquidity for the executor (assuming that the total volume at the best bid and ask does not change on average).

\medskip

Next, for a fixed participation rate $\theta\in(0,1]$, we compute the expected price impact as a function of executed volume $Q$ (i.e., the expected price trajectory). Recall the formula \eqref{eq.I0.rep.1},
\begin{equation*}
I(Q,\theta)=\int_0^1 \EE \lceil \hat Y^{x}_{Q}\rceil f(0,x)\,dx - 1,
\end{equation*}
where, in the present case,
\begin{equation*}
\hat Y^x_t = x + \int_0^t \hat\mu_1(\theta,\hat Y^x_u) du
+ \frac{1}{\sqrt{\theta}}\hat W_t,
\end{equation*}
$$
\hat\mu_1(\theta,y)= \frac{\alpha}{\theta(2a-1)} \left(2\theta a - 1 + 2(y\,\text{mod}\,1)(1-\theta)\right).
$$
The PDE for $u^{Q,\theta}(t,x):=\EE \lceil \hat Y^{x}_{Q-t}\rceil$ is given by
$$
u^{Q,\theta}_t + \hat\mu_1(\theta,x) u^{Q,\theta}_x + \frac{1}{2\theta} u^{Q,\theta}_{xx} = 0,
\quad t\in[0,Q),\,\,x\in\RR,
\quad u^{Q,\theta}(Q,x) = \lceil x\rceil.
$$
We use the explicit Euler scheme to approximate $u$ and compute the impact curve $I(\cdot,\theta)$ by approximating numerically the integral
$$
I(Q,\theta) = \int_0^1 u^{Q,\theta}(0,x) f(0,x) dx - 1.
$$
The result of this computation is shown in the left part of Figure \ref{fig:2}.

\medskip

Finally, we address the question of price resilience, which measures the expected trajectory of the midprice after a meta-order has been executed. We assume that the execution of the meta-order lasted long enough, so that the process describing the conditional distribution of the fundamental price run on the business time of the executor, $\hat Y$, had entered into its stationary regime before the execution was over. Mathematically, the latter means that, after the execution, the fundamental price run on the business time of the market follows the process
\begin{equation*}
\check Y^x_t = x + \int_0^t \check\mu_1(\check Y^x_u) du
+ \check W_t,
\end{equation*}
where $\check W$ is a Brownian motion, and
$$
\check\mu_1(y) := \frac{\alpha}{2a-1} \left(2(y\,\text{mod}\,1) - 1\right).
$$
Hence, the price resilience is defined as
\begin{equation*}
R(\bar V,\theta):=\int_0^1 \EE \lceil \check Y_{\bar V}\rceil f(\theta,x)\,dx - 1,
\end{equation*}
where $\bar V$ represents the total traded volume in the market. The right part of Figure \ref{fig:2} shows $R(\bar V,\theta)$ as a function of $\bar V$. Note that, since $\theta=0.2$, the range of values of $\bar V$, in the right part of Figure \ref{fig:2}, is chosen to match the range of values of $Q$, in the left part: indeed, the execution of a meta-order of size $Q$ via a VWAP strategy with participation rate $\theta=0.2$ will terminate when the total traded volume becomes $\bar V=Q/\theta=5Q$. It is clear from the right part of Figure \ref{fig:2} that the price resilience is convex and that the expected midprice does not decay to its initial level, which is consistent with the existing theoretical and empirical findings.

\begin{remark}
Figure \ref{fig:2} indicates that the expected price impact of a VWAP meta-order in the proposed model is asymptotically linear, for large $Q$, hence, the concavity becomes less pronounced for large order sizes. This is indeed the case, and it follows from the existence of $\lim_{Q\rightarrow\infty} \partial_Q I(Q,\theta)$, provided the latter is strictly positive. In principle, thisr limit may be zero, but this is excluded by the assumptions on $F$ and $\sigma$ made at the beginning of Section \ref{subse:finact.jumpdiff.def}.
\end{remark}

\begin{remark}
It is important to note that the choice of the model parameters in the above example was dictated purely by the desire to obtain analytic tractability. In this work, we do not address the important question of what model parameters are consistent with the empirically observed dynamics of microprice and order flow, which requires a separate investigation. Once the realistic parameters are found, one can determine the magnitude of the predicted change in the wings of the stationary distribution, during a VWAP meta-order, for realistic participation rates, to see whether the concavity of the expected price impact predicted by the model is significant for practical purposes.
\end{remark}

\subsection{Testing the main model prediction on market data}
\label{subse:marketdata}

The heuristic argument described in Section \ref{se:intro} shows that the concavity of the price impact can be derived from two predictions. The first prediction is that the stationary distribution of the fundamental price modulo tick size, run on the business clock, is U-shaped. And the second prediction states that, during the execution of a VWAP meta-order, the fundamental price (run on the business clock) obtains an additional constant drift in the direction of the order. Although one cannot test empirically the second assumption without having access to the information about meta-orders, it does not seem that this assumption requires any additional justification beyond common sense. Therefore, in this subsection, we test the first assumption, using only publicly available data, without any information about the meta-orders themselves.

\smallskip

The experiment presented here uses the data from NASDAQ exchange obtained via the ITCH protocol, which provides information about every event in the limit order book. An event may be an execution, an addition, or a cancelation, of a limit order, and the associated prices and volumes are either specified directly or can be recovered from prior events (see \cite{JaimungalBook} for a more detailed description of the ITCH protocol). Using this data, one can reverse-engineer the trade volumes of the volumes of limit orders at the first few levels of the limit order book, at the time of every event.\footnote{The author thanks S. Jaimungal for providing this data. %It is important to note, however, that the data currently posted on S. Jaimungal's website is different from the one used herein and yields different results. The currently posted sample aggregates co-directed market orders of the same sign if they occur within one millisecond. Such pre-processing makes the sample more convenient for many statistical experiments, but it is not well suited for the present analysis, as it distorts the relationship between the volumes of limit and market orders.
}
The time interval we use covers November 3--7 of 2014 and includes the tickers CSCO, INTC, MSFT, VOD, LBTYK, LVNTA
%\footnote{We restrict our analysis to large-tick stocks, for the reasons explained in the introduction and further in this section.} We perform the analysis for each ticker separately.

\medskip

Following the discussion in Section \ref{se:intro}, we interpret the fundamental price $X$ as the microprice (measured in ticks) and, in turn, $X\texttt{\,mod\,}1$ as the limit order imbalance:
$$
X_t\,\texttt{mod}\,1 \approx \frac{V^b_t}{V^b_t + V^a_t},
$$
where $V^b$ and $V^a$ are the limit order volumes at the best bid and ask respectively, and $t$ is restricted to the time stamps of the events recorded in the database.
Our goal is to estimate the stationary density $f(0,\cdot)$ of $\hat X\,\texttt{mod}\,1$, where $\hat X$ stands for the fundamental price $X$ run on the business clock:
$$
\hat X_t = X_{(V_\cdot)^{-1}(t)},
$$
where $V_t$ is the total traded volume by time $t$.
Due to the ergodicity of $\hat X\,\texttt{mod}\,1$ (see Lemma \ref{le:ergodicity}), for any $0\leq a< b\leq 1$, we have
\begin{align*}
&\int_a^b f(0,y) dy = \lim_{T\rightarrow \infty} \frac{1}{T}\int_0^T \bone_{[a,b]}(\hat X_t) dt
= \lim_{T\rightarrow \infty} \frac{1}{T}\int_0^{(V_\cdot)^{-1}(T)} \bone_{[a,b]}(X_t) dV_t
= \lim_{T\rightarrow \infty} \frac{1}{V_T}\int_0^{T} \bone_{[a,b]}(X_t) dV_t.
\end{align*}
Splitting the interval $[0,1]$ into $10$ intervals $J_1,\ldots,J_{10}$ of length $0.1$, we obtain the following natural approximation
\begin{equation}\label{eq.empirical.StatDens.prelim}
f(0,x) \approx \frac{10}{V_T}\int_0^{T} \bone_{J_k}(X_t) dV_t,\quad x\in J_k,\quad k=1,\ldots,10.
\end{equation}

\medskip

In order to implement \eqref{eq.empirical.StatDens.prelim}, we need to make an important decision on how to interpret the integral in its right hand side.
Recall that the theoretical connection between the marginal expected price impact and the tails of the stationary distribution of $\hat X\,\texttt{mod}\,1$, given by Propositions \ref{prop:marginalimpact.at.zero} and \ref{prop:Iq.lim}, is established in the infinite-activity regime, in which $V_\cdot$ is continuous and, hence, there is no ambiguity in how to understand the integral in the right hand side of \eqref{eq.empirical.StatDens.prelim}. The infinite-activity assumption is important for the theoretical results of this work, as it allows us to ignore the sizes of market orders (more precisely, the joint distribution of these sizes and the fundamental price) in deriving the expression for the marginal expected impact. On the other hand, this assumption is clearly an abstraction, as the actual market orders do not have infinitesimal sizes and, as a result, the process $V_\cdot$ changes by jumps only. Since the process $X_\cdot$ also jumps at the jump times of $V_\cdot$, we need to choose an appropriate (natural) interpretation of the integral in \eqref{eq.empirical.StatDens.prelim}. The following interpretations are used herein.

\begin{enumerate}
\item To explain the first interpretation (and the resulting estimator), let us recall that the purpose of our estimation of the stationary density is to connect it to the expected marginal impact of a VWAP executor. This connection is based on the observation that, by the nature of the VWAP strategy, the executor's trades are mimicking the evolution of the total traded volume process $V$. Hence, the total fraction of market orders executed while the fundamental price is in the interval $J_k$ is equal to the fraction of the executor's trades submitted while $X$ is in $J_k$. It remains to notice that the latter fraction, for $J_k$ close to zero or one, determines the expected marginal impact of the executor. Thus, a natural interpretation of the right hand side of \eqref{eq.empirical.StatDens.prelim} should provide a reasonable approximation to the fraction of executor's trades submitted while $X$ is in $J_k$. Notice that the discontinuity of $V_\cdot$ makes it impossible to follow the VWAP strategy perfectly:\footnote{There is, of course, a more fundamental reason why it is impossible to follow the VWAP perfectly -- it is because the executor does not possess a perfect predictive power for the future traded volume. However, herein, we focus on a different reason.} indeed, the VWAP executor cannot submit her market orders exactly at the same time as the ones submitted by other traders. Assuming that the executor copies a fraction of every market order submitted by other traders (with the size of the fraction determined by her participation rate) and submits her copy right before the associated market order of another trader with probability $w$, and right after it with probability $1-w$ (independent of everything else), we arrive at the estimator
\begin{align}
&f(0,x)\approx \frac{10}{\sum_{t_i\leq T} \Delta V_{t_i}} \sum_{t_i\leq T} \left(w \bone_{J_k}\left(\frac{V^b_{t_i-}}{V^b_{t_i-} + V^a_{t_i-}}\right)
+ (1-w) \bone_{J_k}\left(\frac{V^b_{t_i+}}{V^b_{t_i+} + V^a_{t_i+}}\right)\right) \Delta V_{t_i},\nonumber\\
&x\in J_k,\quad k=1,\ldots,10,\label{eq.empirical.StatDens.weighted}
\end{align}
where $\{t_i\}$ are the arrival times of the market orders, and $V^{a/b}_{t_i-}$ and $V^{a/b}_{t_i+}$ denote, respectively, the limit order volumes right before and right after the $i$th market order.
We refer to \eqref{eq.empirical.StatDens.weighted} as the ``weighted estimator". Note that, from a numerical point of view, this estimator corresponds to approximating the integrand in \eqref{eq.empirical.StatDens.prelim}, at the jump times of the integrator, via a weighted average of its values to the left and to the right of the jump time.
From a financial point of view, the size of $w$ measures how well and how fast the executor can predict the size of the next market order. An executor who is unable or unwilling to predict the individual sizes of external market orders, or who is not fast enough to submit her order ahead of the external one, will have $w=0$. 
For $w=0.5$, we call \eqref{eq.empirical.StatDens.weighted} the ``equally weighted estimator".

\item The next estimator is based on the assumption that the VWAP executor does not try to predict the sizes of individual market orders submitted by other traders, and that she aims to reproduce the desired fraction of the total traded volume by executing trades of a fixed size each at the (dynamically changing) rate at which other market orders arrive.\footnote{Of course, this strategy would not correspond to a perfect VWAP, but, as mentioned before, the latter is impossible anyway.} In this case, \eqref{eq.empirical.StatDens.prelim} changes to
\begin{equation}\label{eq.empirical.StatDens.prelim.uniform}
f(0,x) \approx \frac{10}{\sum_{t_i\leq T} 1}\int_0^{T} \bone_{J_k}(X_t) d \sum_{t_i\leq T}\bone_{[0,t]}(t_i),\quad x\in J_k,\quad k=1,\ldots,10.
\end{equation}
Assuming, as in item 1 above, that the executor aims to submit each trade at the same time as a market order from another trader arrives, and that she is slightly early with probability $w$, and slightly late with probability $1-w$, we obtain the estimator
\begin{align}
&f(0,x)\approx \frac{10}{\sum_{t_i\leq T} 1} \sum_{t_i\leq T} \left(w \bone_{J_k}\left(\frac{V^b_{t_i-}}{V^b_{t_i-} + V^a_{t_i-}}\right)
+ (1-w) \bone_{J_k}\left(\frac{V^b_{t_i+}}{V^b_{t_i+} + V^a_{t_i+}}\right)\right),\nonumber\\
&x\in J_k,\quad k=1,\ldots,10,\label{eq.empirical.StatDens.uniform}
\end{align}
which we call the ``weighted uniform estimator". For $w=0.5$, we call \eqref{eq.empirical.StatDens.uniform} the ``equally weighted uniform estimator".

\item The last estimator proposed herein is based on interpreting the integral in the right hand side of \eqref{eq.empirical.StatDens.prelim} as a limit of the integrals with continuous integrators (in the spirit of the infinite-activity model studied in the preceding sections). Namely, for each discontinuity time $t_i$ of $V_\cdot$ and for any $\varepsilon>0$ small enough, so that $V_\cdot$ is constant in $[t_i-\varepsilon,t_i)$ and in $(t_i,t_i+\varepsilon]$, we replace
\begin{equation*}
\int_{t_i-\varepsilon}^{t_i+\varepsilon} \bone_{J_k}(X_t) dV_t
\end{equation*}
via
\begin{equation}\label{eq.empirical.StatDens.uniform.int.approx}
R^i_{k}:=\int_{0}^{\Delta V_{t_i}} \bone_{J_k}\left( \frac{V^b_{t_i-}}{V^b_{t_i-} + V^a_{t_i-}-v} \right) dv
\quad\text{or}\quad 
\int_{0}^{\Delta V_{t_i}} \bone_{J_k}\left( \frac{V^b_{t_i-}-v}{V^b_{t_i-} + V^a_{t_i-}-v} \right) dv,
\end{equation}
depending on whether $t_i$ is the time of a buy or a sell order, respectively.\footnote{The integrals in \eqref{eq.empirical.StatDens.uniform.int.approx} can be computed in a closed but somewhat cumbersome form.}
Note that this approximation is based on the assumption that the execution of the $i$th market order takes $2\varepsilon$ units of time. However, since the values of the integrals in \eqref{eq.empirical.StatDens.uniform.int.approx} do not depend on $\varepsilon$, one can always make this assumption, provided $\varepsilon$ is smaller than the resolution at which the events are recorded in the database.
As the integral of $\bone_{J_k}(X_t) dV_t$ over the compliment of $\bigcup_i [t_i-\varepsilon,t_i+\varepsilon]$ is zero, the above determines the value of the associated estimator 
\begin{equation}\label{eq.empirical.StatDens.cont}
f(0,x) \approx \frac{10}{\sum_{t_i\leq T} \Delta V_{t_i}} R^i_k,\quad x\in J_k,\quad k=1,\ldots,10,
\end{equation}
which we refer to as the ``continuous estimator".
\end{enumerate}

\medskip

In order to implement the estimators \eqref{eq.empirical.StatDens.weighted}, \eqref{eq.empirical.StatDens.uniform} and \eqref{eq.empirical.StatDens.cont} in a manner that is consistent with the theoretical part of the paper, one needs to restrict the sample of available trades. First of all, only the trades submitted during the NASDAQ normal trading hours (9:30am--4pm Eastern Standard Time) are considered in this empirical analysis. Second, in order to stay as close as possible to the assumption of a constant one-tick bid-ask spread, we further restrict the universe of market orders to those that arrive when the bid-ask spread is equal to one tick and that do not execute any limit orders located in the higher levels of the limit order book (i.e., higher than the first level, measured right before the market order arrives). In addition, if the $i$th market order is a buy and if the bid-ask spread becomes larger than one tick right after this order is executed, we set $V^a_{t_i+}:=0$, and we proceed symmetrically for the sell orders. This restriction ensures that, in the business time intervals included in our analysis, the desired relationship between the fundamental price and the bid and ask prices is preserved, and that the financial rationale described in the first item of the above list remains valid.

\medskip

The estimated stationary density $f(0,\cdot)$ is presented in Figures \ref{fig:3}--\ref{fig:8}, for each estimator and each ticker in our sample.
Figures \ref{fig:3}--\ref{fig:5} show the estimated stationary density over one week of data. They clearly indicate the U-shape property of the density for most tickers, except LBTYK and LVNTA. The reason that the theoretically predicted U-shape property fails for the latter tickers stems from the fact that they are not large-tick stocks: Table \ref{tb:1} confirms that the percentage of traded volume that occurs while the spread is equal to a single tick is significantly smaller for LBTYK and LVNTA than for the other tickers.\footnote{One may also be tempted to explain the lack of the U-shape property of the stationary densities of LBTYK and LVNTA by the relatively low traded volume of these stocks, which may negatively affect the quality of the estimates. However, the traded volume of LBTYK is very close to that of VOD, and the latter ticker has U-shaped stationary density.} Indeed, for a small-tick stock, the microprice does not satisfy the properties of a fundamental price stated in the introduction: a change in the best bid or ask price may not correspond to the microprice crossing the best bid or ask.
%And it is clear that, for a small-tick stock, whose spread varies across many multiples of the tick size, the proposed model may not be a good description of the dynamics of the fundamental price. Indeed, as the spread size is known to be strongly mean-reverting, the drift and volatility of the fundamental price must, at the very least, depend on the value of the spread itself. In addition, the best bid and ask prices can no longer be assumed to be the roundings of the fundamental price, if the spread varies significantly. The failure of these assumptions is the reason why only the trades occurring at small spreads are included in the computation \eqref{eq.empirical.StatDens}.

Figure \ref{fig:6} indicates that the U-shape property of the stationary density can also be observed on a single-day estimate. However, the presence of daily trends distorts the symmetry of this distribution. Averaging over multiple days helps restore this symmetry.

Figures \ref{fig:7}--\ref{fig:8} show that the U-shape property of the weighted estimate is not very robust w.r.t. the weight $w$:\footnote{We thank the anonymous referee who pointed it out.} on our sample, it is observed for $w\leq 0.6$ (the smaller is $w$ the stronger is the U-pattern) but disappears, and even reverses, as $w$ approaches one. On the other hand, as shown in Figures \ref{fig:7}--\ref{fig:8}, the weighted uniform estimator produces U-shaped stationary densities on our sample, even for large values of $w$. Overall, the empirical analysis supports the prediction of a U-shaped stationary density.

\medskip

To conclude this section, let us explain why it is very challenging to test empirically the second theoretical prediction of this work -- that the wings of the fundamental price distribution decrease during a meta-order -- without using the meta-order data itself. One may be tempted to treat the time periods with pronounced trends in the order flow as being analogous to the execution intervals of meta-orders. If this was a fair analogy, then, in principle, one could estimate the stationary price distribution over such intervals and compare its wings to the stationary distribution estimated over the entire sample. However, there is a subtlety that makes this task significantly more complicated. The theoretical conclusion of Section \ref{subse:concavity} that the wings of the fundamental price distribution decrease during a meta-order is not just a consequence of the fact that the price obtains a drift of a constant sign during a meta-order. Namely, for the conclusion of Section \ref{subse:concavity} to hold it is also important that the VWAP execution strategy does not alter the relationship between the fundamental price and the arrival rate of the market orders.\footnote{Mathematically, it means that, on the business clock, the drift of the price process that is due to the meta-order is constant.} Indeed, a VWAP executor partially hides her activity by submitting her market orders at the same rate as other market participants. As a result, during the VWAP meta-order, the overall arrival rate of the market orders remains the same (symmetric U-shaped) function of the fundamental price (see \eqref{eq.totVolume.infAct}) -- only the balance between the buy and sell market orders changes during the execution. The latter can also be interpreted as the assumption that, on average, the executor's activity remains largely undetected by the other traders, so that they do not change their behavior and, in particular, the relationship between the limit order imbalance and the arrival rate of the market orders remains unchanged. In general, there is no reason why this property would hold during a typical period of positive or negative trend in the order flow, which means that the distribution of the microprice in such time periods may not have the same properties as during a VWAP meta-order. Indeed, the top left panel of Figure \ref{fig:6} shows the stationary distribution of the microprice modulo tick size on a day with a positive trend in the order flow. It is clear that this distribution has more mass concentrated in $[0.5,1]$ than in $[0,0.5]$. On the other hand, Figure \ref{fig:1} shows that the opposite is expected to occur during a buy VWAP meta-order. Thus, in order to test the second theoretical prediction of this work, one needs to find the time intervals during which the order flow has a significant trend, while the relationship between the microprice and the arrival rate of the market orders remains the same as in the overall sample. Needless to say, it is very challenging to detect such intervals, and it presents an interesting topic for further investigation.

%%%%%%%%%%%%%%%%%%%%%%%%%%%%%%%%%

%{\color{red} May plot expected impact as a function of target traded volume over fixed time horizon. May add estimation of $\alpha$.}

%%%%%%%%%%%%%%%%%%%%%%%

\section{Appendix}

In this appendix, we show that there exists a Poisson random measure (PRM) $M$ with the compensator
$$
\mu(dt,dx) = \lambda dt\otimes dF(x),
$$
such that
\begin{align}
&\int_0^t\int_\RR \left(\bone_{\{x\geq \beta^+(\tilde X_{u-})\}} - \bone_{\{x\leq \beta^-(\tilde X_{u-})\}}\right) M(du,dx)\label{eq.appendix.eq1}\\
&= \int_0^t\int_\RR\sum_{j=1}^K \left(\bone_{\{x\geq \beta^+(\tilde X_{u-})\}} + \bone_{\{x\leq \beta^-(\tilde X_{u-})\}}\right) \zeta^j(\tilde X_{u-},u) M^j(du,dx),\nonumber
\end{align}
where $\{M^j,\zeta^j\}$ are described in \eqref{eq.Nj.def}--\eqref{eq.sec2.Pzetaj.def} and
\begin{align*}
&\tilde X_t = X_0 + \alpha\delta \int_0^t\int_\RR\sum_{j=1}^K \left(\bone_{\{x\geq \beta^+(\tilde X_{u-})\}} + \bone_{\{x\leq \beta^-(\tilde X_{u-})\}}\right) \zeta^j(\tilde X_{u-},u) M^j(du,dx) + \int_0^t \sigma(\tilde X_u) d\tilde B_u.
\end{align*}
First, we fix $j$ and introduce the random function
\begin{align*}
&g(u,x):=x\,\bone_{(\beta^-(\tilde X_{u-}),\beta^+(\tilde X_{u-}))}(x) + \eta^+(u)\,\bone_{\{1\}}(\zeta^j(\tilde X_{u-},u))\,\bone_{\RR\setminus(\beta^-(\tilde X_{u-}),\beta^+(\tilde X_{u-}))}(x)\\
&+ \eta^-(u)\,\bone_{\{-1\}}(\zeta^j(\tilde X_{u-},u))\,\bone_{\RR\setminus(\beta^-(\tilde X_{u-}),\beta^+(\tilde X_{u-}))}(x),
\end{align*}
where $\eta^{+}_u$ and $\eta^-_u$ are random variables independent of everything else and distributed according to
$$
\frac{dF(x)}{F(-\beta^+(\tilde X_{u-}))}\,\bone_{[\beta^+(\tilde X_{u-}),\infty)}(x),
\quad \frac{dF(x)}{F(\beta^-(\tilde X_{u-}))}\,\bone_{(-\infty,\beta^-(\tilde X_{u-})]}(x),
$$
respectively.
Let us show that the random measure
$$
\tilde M^j(du,dx):= M^j(du,dx)\circ g(u,\cdot)^{-1}
$$
is a PRM with the same compensator as $M^j$. Indeed, for any predictable random function $\Phi$ (w.r.t. $\mathbb{F}$ which is the completion of the natural filtration of $(\tilde M^j,\tilde B)$), we have\footnote{To address the potential measure-theoretic questions when conditioning on the random elements $\zeta^j$, $\eta^+$, $\eta^-$, we recall that they only need to be defined at the time coordinates of the atoms of $\{M^j\}$.}
\begin{align*}
&\EE\left( \int_s^t \int_\RR \Phi(u,x) \tilde M^j(du,dx)\, \vert\, \mathcal{F}_s\right)
= \EE\left( \int_s^t \int_\RR \Phi(u,g(u,x)) M^j(du,dx)\, \vert\, \mathcal{F}_s\right)\\
&= \EE\left[\EE\left( \int_s^t \int_\RR \Phi(u,g(u,x)) M^j(du,dx)\, \vert\, \zeta^j, \eta^+, \eta^-, \mathcal{F}^{\{M^j\},\tilde B}_s\right)\vert \,\mathcal{F}_s\right]\\
&= \EE\left[\EE\left( \int_s^t \int_\RR \left(\Phi(u,x)\,\bone_{(\beta^-(\tilde X_{u-}),\beta^+(\tilde X_{u-}))}(x) 
+ \Phi(u,e^+(u))\,\bone_{\{1\}}(f(\tilde X_{u-},u))\,\bone_{\RR\setminus(\beta^-(\tilde X_{u-}),\beta^+(\tilde X_{u-}))}(x)\right.\right.\right.\\
&\left.\left.\left.+ \Phi(u,e^-(u))\,\bone_{\{-1\}}(f(\tilde X_{u-},u))\,\bone_{\RR\setminus(\beta^-(\tilde X_{u-}),\beta^+(\tilde X_{u-}))}(x)\right)\lambda^j\, du\, dF(x)\, \vert\, \mathcal{F}^{\{M^j\},\tilde B}_s\right)_{f=\zeta^j,\,e^+=\eta^+,\,e^-=\eta^-}\vert \,\mathcal{F}_s\right]\\
&=\EE\left[\int_s^t \EE \left( \int_{\beta^-(\tilde X_{u-})}^{\beta^+(\tilde X_{u-})} \Phi(u,x) dF(x) 
+ \Phi(u,\eta^+(u))\,\bone_{\{1\}}(\zeta^j(\tilde X_{u-},u)) (1-F(\beta^+(\tilde X_{u-}))+F(\beta^-(\tilde X_{u-})))\right.\right.\\
&\left.\left.+ \Phi(u,\eta^-(u))\,\bone_{\{-1\}}(\zeta^j(\tilde X_{u-},u))(1-F(\beta^+(\tilde X_{u-}))+F(\beta^-(\tilde X_{u-})))\, \vert\, \zeta^j_{[0,u)}, \eta^+_{[0,u)}, \eta^-_{[0,u)}, \mathcal{F}^{\{M^j\},\tilde B}_u\right)\lambda^j\, du\,\vert \,\mathcal{F}_s\right]\\
&= \EE\left[\int_s^t \left( \int_{\beta^-(\tilde X_{u-})}^{\beta^+(\tilde X_{u-})} \Phi(u,x) dF(x)
+ \int_{\beta^+(\tilde X_{u-})}^{\infty} \Phi(u,x)dF(x)
+ \int_{-\infty}^{\beta^+(\tilde X_{u-})} \Phi(u,x)dF(x)\right)\lambda^j\, du\,\vert \,\mathcal{F}_s\right]\\
&= \EE\left[\int_s^t \int_\EE \Phi(u,x) \lambda^j\, du\,dF(x)\,\vert \,\mathcal{F}_s\right].
\end{align*}
It only remains to notice that the right hand side of \eqref{eq.appendix.eq1} is equal to
$$
\int_0^t\int_\RR \left(\bone_{\{x\geq \beta^+(\tilde X_{u-})\}} - \bone_{\{x\leq \beta^-(\tilde X_{u-})\}}\right) \sum_{j=1}^K \tilde M^j(du,dx),
$$
which verifies our claim with $M:=\sum_{j=1}^K \tilde M^j$. Notice also that since the PRM $M$ and the Brownian motion $\tilde B$ are adapted to the same filtration they are independent.

%%%%%%%%%%%%%%%%%%%%%%%

\bibliographystyle{plain}
\bibliography{NonlinPriceImpact}

%%%%%%%%%%%%%%%%%%%%%%%%%%%%%%

\begin{table}[hb]
\centering
 \caption{Total traded volume (in number of shares) and the fraction of this volume that is used in the estimate (i.e., of the volume due to the trades that arrive while the bid-ask spread is at one tick and do not eat into the higher levels of the book), for the time period November  3-7, 2014.}\label{tb:1}
\begin{tabular}{|l|c|c|c|c|c|c|} 
\hline
  & CSCO & INTC & MSFT & VOD & LBTYK & LVNTA \\ \hline
Traded volume & 20,848,074 & 25,212,394 & 29,949,217 & 3,955,416 & 4,602,998 & 1,047,510 \\ \hline
\% of traded volume used & 87.25 & 85.23 & 87.03 & 92.09 & 64.49 & 42.73 \\ \hline
\end{tabular}
\end{table}

%%%%%%%%%%%%%%%%%%%%%%%%%%%%%%

\begin{figure}
\begin{center}
  \begin{tabular} {cc}
    {
    \includegraphics[width = 0.45\textwidth]{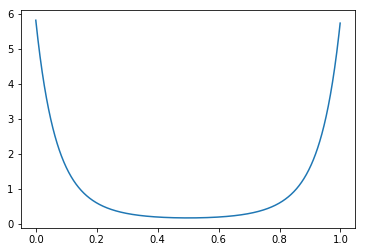}
    } & {
    \includegraphics[width = 0.45\textwidth]{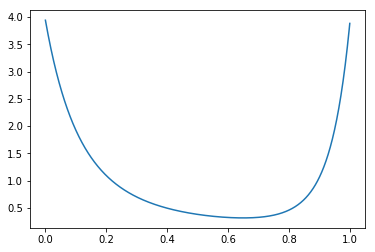}
    }\\
    {
    \includegraphics[width = 0.45\textwidth]{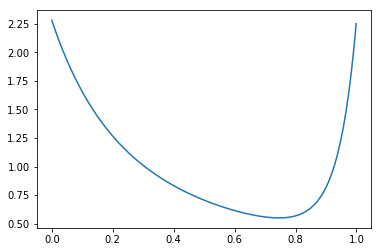}
    } & {
    \includegraphics[width = 0.45\textwidth]{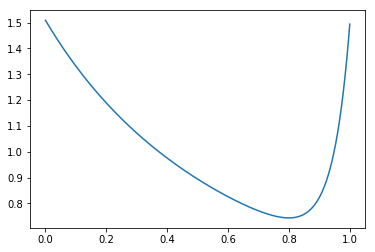}
    }\\
  \end{tabular}
  \caption{Stationary density $f(\theta,\cdot)$, for $\theta=0$ (top left), $\theta=0.2$ (top right), $\theta=0.4$ (bottom left), and $\theta=0.6$. Parameters used are: $F(x)=\bone_{[-a,a]}(x)/(2a)$, $a=1.2$, $\rho=1$, $\alpha=10$.}
    \label{fig:1}
\end{center}
\end{figure}

\begin{figure}
\begin{center}
  \begin{tabular} {cc}
    {
    \includegraphics[width = 0.45\textwidth]{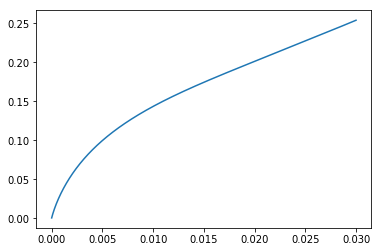}
    } & {
    \includegraphics[width = 0.45\textwidth]{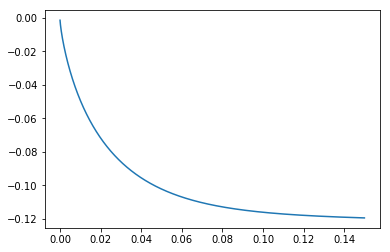}
    }\\
  \end{tabular}
  \caption{Left: the expected impact on midprice $I(Q,\theta)$ (measured in the number of ticks) as a function of executed volume $Q$. Right: the price resilience $R(\bar V,\theta)$ as a function of total traded volume $\bar V$. Parameters used are: $F(x)=\bone_{[-a,a]}(x)/(2a)$, $a=1.2$, $\rho=1$, $\alpha=10$, $\theta = 0.2$.}
    \label{fig:2}
\end{center}
\end{figure}

\begin{figure}
\begin{center}
  \begin{tabular} {cc}
    {
    \includegraphics[width = 0.45\textwidth]{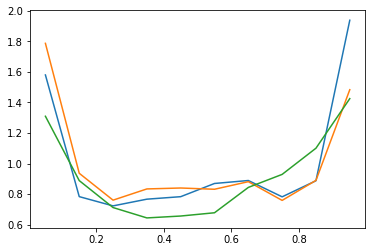}
    } & {
    \includegraphics[width = 0.45\textwidth]{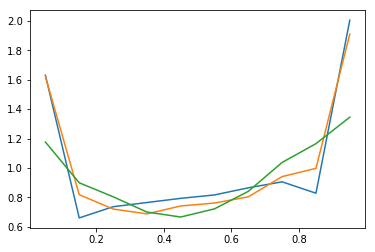}
    }\\
  \end{tabular}
  \caption{Estimated stationary density $f(0,\cdot)$ for CSCO (left) and INTC (right) over November 3--7, 2014. The graphs show the equally weighted estimate (blue line), the equally weighted uniform estimate (orange line), and the continuous estimate (green line).}
    \label{fig:3}
\end{center}
\end{figure}

\begin{figure}
\begin{center}
  \begin{tabular} {cc}
    {
    \includegraphics[width = 0.45\textwidth]{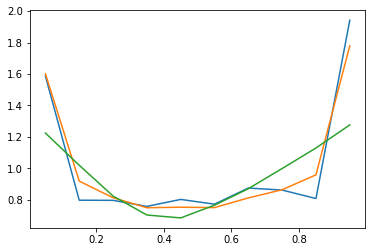}
    } & {
    \includegraphics[width = 0.45\textwidth]{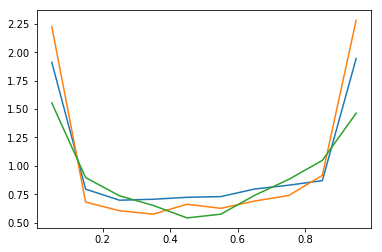}
    }\\
  \end{tabular}
  \caption{Estimated stationary density $f(0,\cdot)$ for MSFT (left) and VOD (right) over November 3--7, 2014. The graphs show the equally weighted estimate (blue line), the equally weighted uniform estimate (orange line), and the continuous estimate (green line).}
    \label{fig:4}
\end{center}
\end{figure}

\begin{figure}
\begin{center}
  \begin{tabular} {cc}
    {
    \includegraphics[width = 0.45\textwidth]{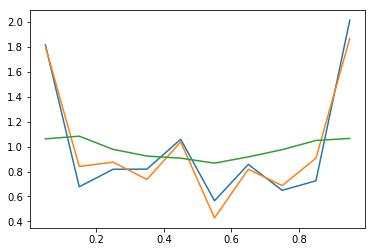}
    } & {
    \includegraphics[width = 0.45\textwidth]{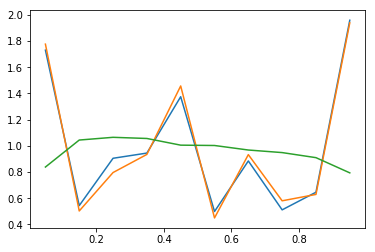}
    }\\
  \end{tabular}
  \caption{Estimated stationary density $f(0,\cdot)$ for LBTYK (left) and LVNTA (right) over November 3--7, 2014. The graphs show the equally weighted estimate (blue line), the equally weighted uniform estimate (orange line), and the continuous estimate (green line).}
    \label{fig:5}
\end{center}
\end{figure}

\begin{figure}
\begin{center}
  \begin{tabular} {cc}
    {
    \includegraphics[width = 0.45\textwidth]{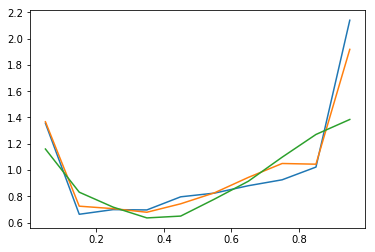}
    } & {
    \includegraphics[width = 0.45\textwidth]{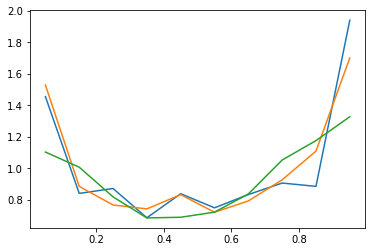}
    }\\
    {
    \includegraphics[width = 0.45\textwidth]{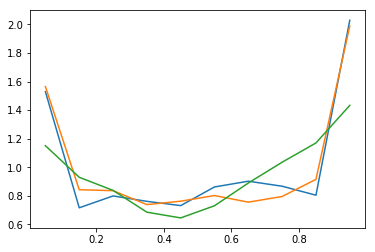}
    } & {
    \includegraphics[width = 0.45\textwidth]{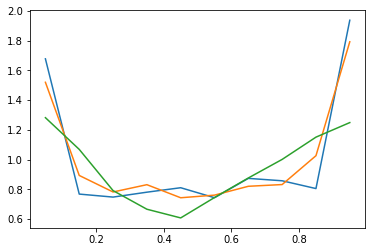}
    }\\
    {
    \includegraphics[width = 0.45\textwidth]{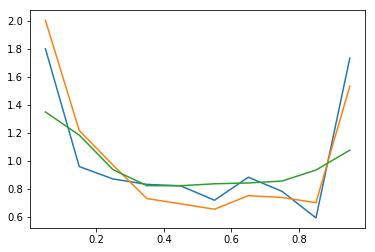}
    } & {
    \includegraphics[width = 0.45\textwidth]{MSFT_imb.png}
    }\\
  \end{tabular}
  \caption{Estimated stationary density $f(0,\cdot)$ for MSFT over: Nov 3, 2014 (top left), Nov 4, 2014 (top right), Nov 5, 2014 (middle left), Nov 6, 2014 (middle right), Nov 7, 2014 (bottom left), Nov 3-7, 2014 (bottom right). The graphs show the equally weighted estimate (blue line), the equally weighted uniform estimate (orange line), and the continuous estimate (green line).}
    \label{fig:6}
\end{center}
\end{figure}

\begin{figure}
\begin{center}
  \begin{tabular} {cc}
    {
    \includegraphics[width = 0.45\textwidth]{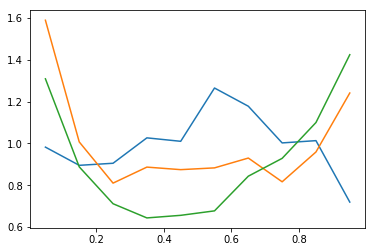}
    } & {
    \includegraphics[width = 0.45\textwidth]{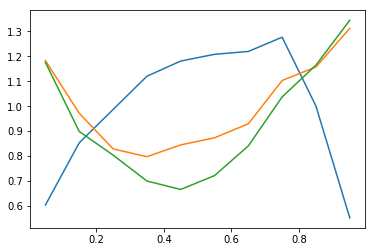}
    }\\
  \end{tabular}
  \caption{Estimated stationary density $f(0,\cdot)$ for CSCO (left) and INTC (right) over November 3--7, 2014. The graphs show the weighted estimate with $w=1$ (blue line), the weighted uniform estimate with $w=1$ (orange line), and the continuous estimate (green line).}
    \label{fig:7}
\end{center}
\end{figure}

\begin{figure}
\begin{center}
  \begin{tabular} {cc}
    {
    \includegraphics[width = 0.45\textwidth]{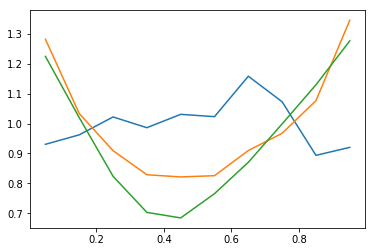}
    } & {
    \includegraphics[width = 0.45\textwidth]{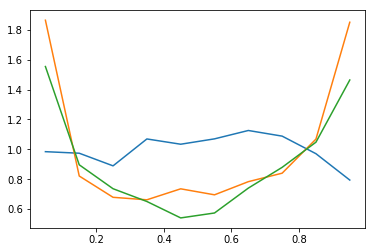}
    }\\
  \end{tabular}
  \caption{Estimated stationary density $f(0,\cdot)$ for MSFT (left) and VOD (right) over November 3--7, 2014. The graphs show the weighted estimate with $w=1$ (blue line), the weighted uniform estimate with $w=1$ (orange line), and the continuous estimate (green line).}
    \label{fig:8}
\end{center}
\end{figure}

\end{document}